\newtheorem{thm}{Theorem}
\newtheorem{lemma}[thm]{Lemma}
\newtheorem{propo}[thm]{Proposition}
\theoremstyle{definition}
\newtheorem{defn}{Definition}
\def\B#1#2{{#1\choose #2}}
\title{The Jordan-Brouwer theorem for graphs}
\author{Oliver Knill}
\date{June 21, 2015}
\address{ Department of Mathematics \\ Harvard University \\ Cambridge, MA, 02138 }
\subjclass{Primary: 05C15, 57M15 } 
\keywords{Topological graph theory, Knot theory, Sphere embeddings, Jordan, Brouwer, Schoenflies}
\begin{document}
\maketitle

\begin{abstract}
We prove a discrete Jordan-Brouwer-Schoenflies separation theorem telling that a $(d-1)$-sphere 
$H$ embedded in a $d$-sphere $G$ defines two different connected graphs $A,B$ in $G$ such a way that 
$A \cap B = H$ and $A \cup B=G$ and such that the complementary graphs $A,B$ are both $d$-balls. 
The graph theoretic definitions are due to Evako: 
the unit sphere of a vertex $x$ of a graph $G=(V,E)$ is the graph generated by $\{ y \; | ; (x,y) \in E\}$. 
Inductively, a finite simple graph is called contractible if there is a vertex $x$ such that both 
its unit sphere $S(x)$ as well as the graph generated by $V \setminus \{x\}$ are contractible. 
Inductively, still following Evako, a $d$-sphere is a finite simple graph for which every unit 
sphere is a $(d-1)$-sphere and such that removing a single vertex renders the graph contractible. 
A $d$-ball $B$ is a contractible graph for which each unit sphere $S(x)$ 
is either a $(d-1)$-sphere in which case $x$ is called an interior point, or $S(x)$ 
is a $(d-1)$-ball in which case $x$ is called a boundary point and such that the set $\delta B$ of 
boundary point vertices generates a $(d-1)$-sphere. 
These inductive definitions are based on the assumption that the empty graph is the unique $(-1)$-sphere 
and that the one-point graph $K_1$ is the unique $0$-ball and that $K_1$ is contractible. 
The theorem needs the following notion of embedding:
a sphere $H$ is embedded in a graph $G$ if it is a subgraph of $G$ and if any intersection 
with any finite set of mutually neighboring unit spheres is a sphere. 
A knot of co-dimension $k$ in $G$ is a $(d-k)$-sphere $H$ embedded in a $d$-sphere $G$.
\end{abstract}

\section{Introduction}

The Jordan-Brouwer separation theorem \cite{Jordan,Brouwer12} assures that the image of an 
injective continuous map $H \to G$ from a $(d-1)$-sphere 
$H$ to a $d$-sphere $G$ divides $G$ into two compact connected regions $A,B$ such that 
$A \cup B = G$ and $A \cap B = H$. Under some regularity assumptions, the Schoenflies theorem assures 
that $A$ and $B$ are $d$-balls. 
Hypersphere embeddings belong to knot theory, the theory of embedding spheres in other spheres, and more generally
to manifold embedding theory \cite{DavermanVenema}. While $H$ is compact and homeomorphic to the standard sphere in ${\bf R}^{d}$, 
already a $1$-dimensional Jordan curve $H \subset {\bf R}^2$ can be complicated, as artwork in
\cite{RossRoss} or Osgood's construction of a Jordan curve of positive area
\cite{Osgood03} illustrate. The topology and regularity of the spheres as well as the dimension assumptions
matter: the result obviously does not hold surfaces $G$ of positive genus.
For codimension $2$ knots $H$ in a $3$-sphere $G$, the complement is connected but not simply connected.
Alexander \cite{Alexander24} gave the first example of a topological embedding of $S^2$ into $S^3$ 
for which one domain $A$ is simply connected while the other $B$ is not. With more regularity of $H$, 
the Mazur-Morse-Brown theorem \cite{Mazur59,Morse59,Brown62} assures that the complementary 
domains $A,B$ are homeomorphic to Euclidean unit balls 
if the embedding of $H$ is locally flat, a case which holds if $H$ is a smooth submanifold of $G$
diffeomorphic to a sphere. In the smooth case, all dimensions except $d=4$ are settled: 
one does not know whether there are smooth embeddings of $S^3$ into $S^4$ such that one of the domains is a
$4$-ball homeomorphic but not diffeomorphic to the Euclidean unit ball.
Related to this open Schoenflies problem is the open smooth Poincar\'e problem, which asks whether
there are is a smooth $4$-sphere homeomorphic but not diffeomorphic to the standard $4$-sphere. If the smooth 
Poincar\'e conjecture turns out to be true and no exotic smooth 4-spheres exist, then
also the Schoenflies conjecture would hold (a remark attributed in \cite{Calegari2013} to Friedman) 
as a Schoenflies counter example with an exotic $4$-ball would lead to an exotic $4$-sphere, a counter example
to smooth Poincar\'e.  \\

Even in the particular case of Jordan, various proof techniques are known. Jordan's
proof in \cite{Jordan} which was unjustly discredited at first \cite{Kline42} but rehabilitated in \cite{HalesJordanProof}. 
The Schoenflies theme is introduced in \cite{schoenflies1,schoenflies2,schoenflies3,schoenflies4}.
Brouwer \cite{Brouwer12} proves the higher dimensional theorem using $p$-dimensional ``nets" defined in Euclidean space.
His argument is similar to Jordan's proof for $d=2$ using an intersection number is what we will follow here. 
The theorem was used by Veblen \cite{Veblen05} to illustrate geometry he developed while writing his thesis
advised by Eliakim Moore. The Jordan curve case $d=2$ has become a test case for fully automated 
proof verifications. Its deepness in the case $d=2$ can be measured by the fact 
that "4000 instructions to the computer generate the proof of the Jordan curve theorem" 
\cite{Hales2007}. There are various proofs known of the Jordan-Brouwer theorem: it has 
been reduced to the Brouwer fixed point theorem \cite{Maehara}, proven using nonstandard analysis 
\cite{Narens} or dealt with using tools from complex analysis \cite{DostalTindell}. 
Alexander \cite{Alexander16} already used tools from algebraic topology and studied the cohomology of the
complementary domains when dealing with embeddings of with finite cellular chains. In some sense, we
follow here Alexander's take on the theorem, but in the language of graph theory, language formed
by A.V. Evako in \cite{I94,Evako1994} in the context of molecular spaces and digital topology.
It is also influenced by discrete Morse theory \cite{forman98,Forman1999}. 

\begin{figure}[ph]
\scalebox{0.14}{\includegraphics{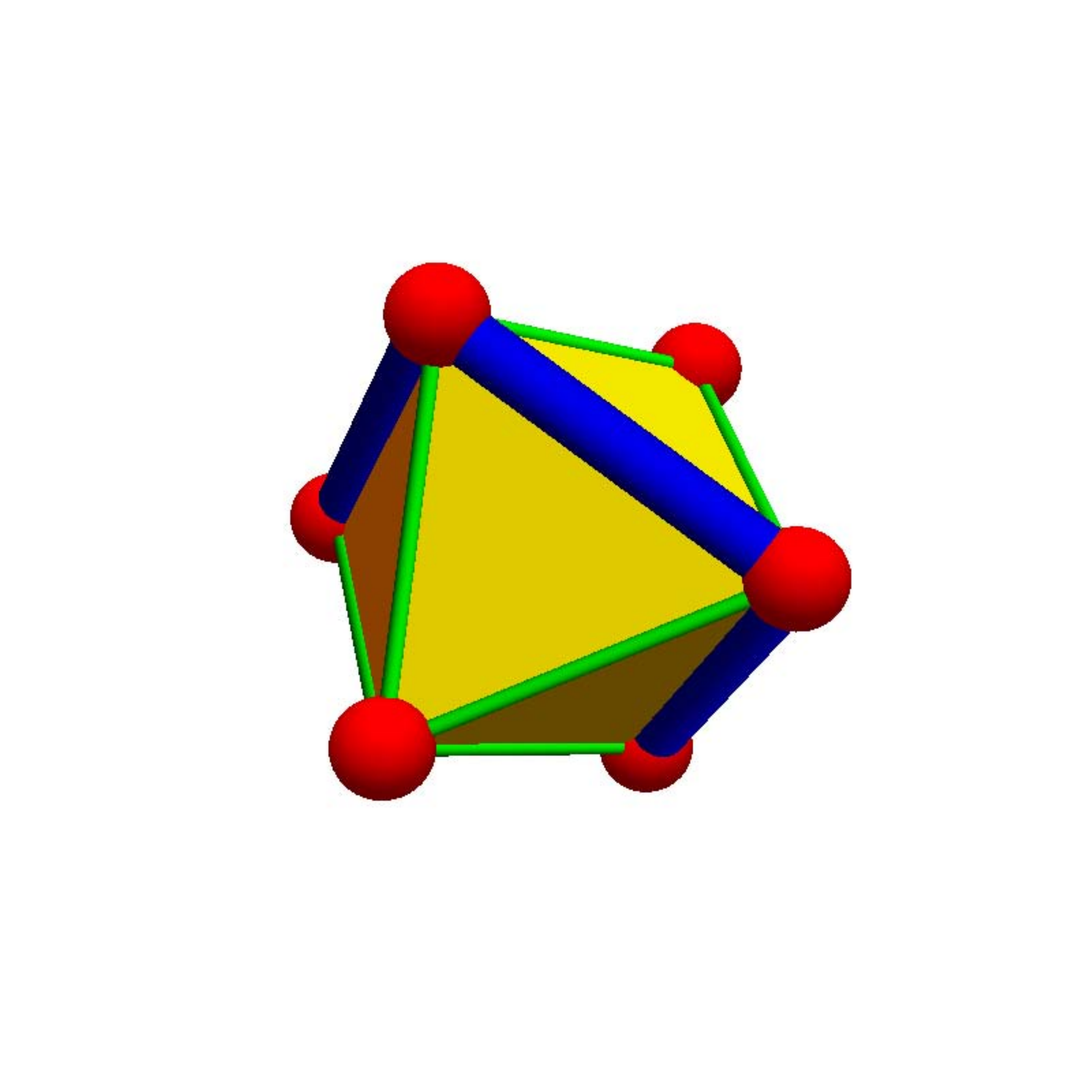}}
\scalebox{0.14}{\includegraphics{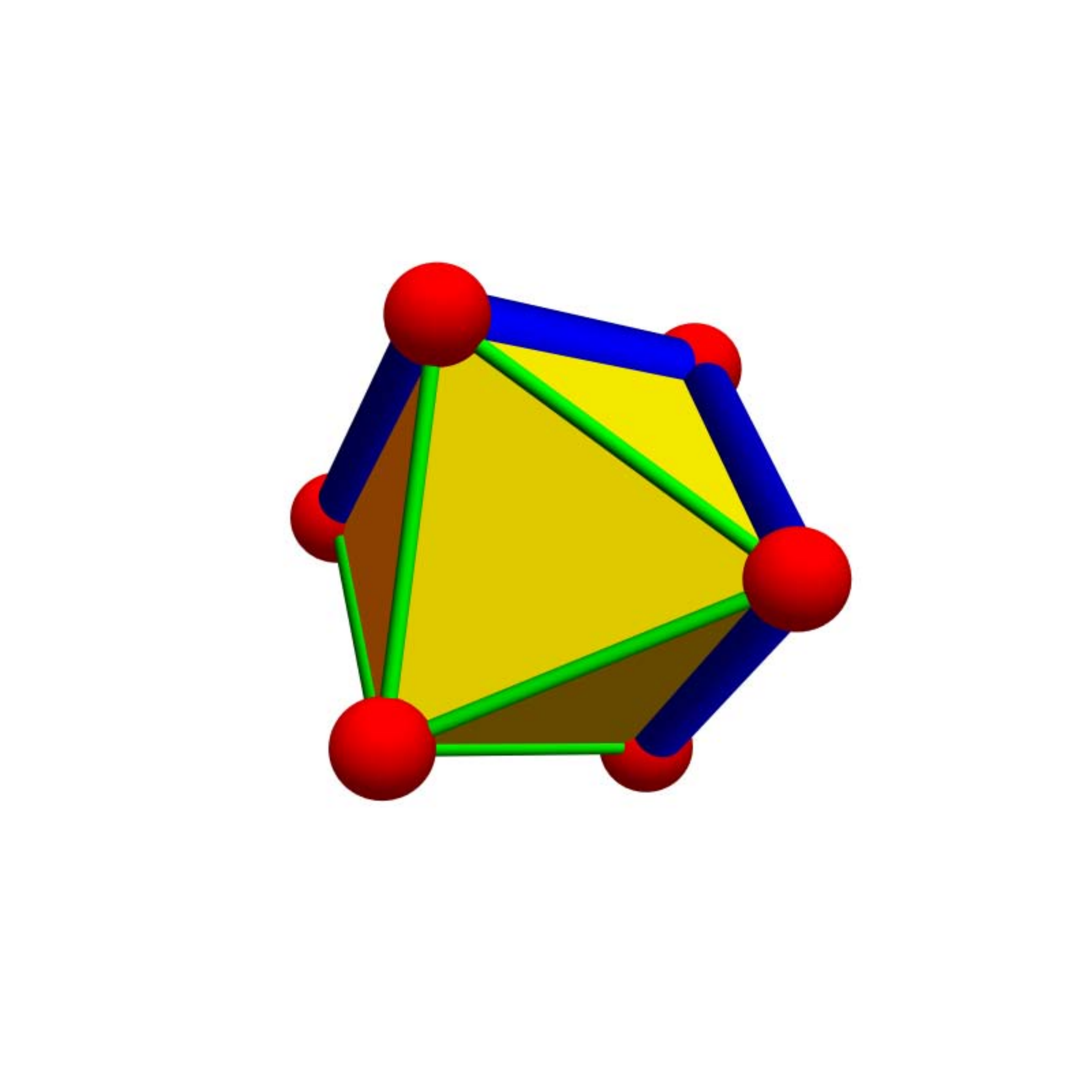}}
\scalebox{0.14}{\includegraphics{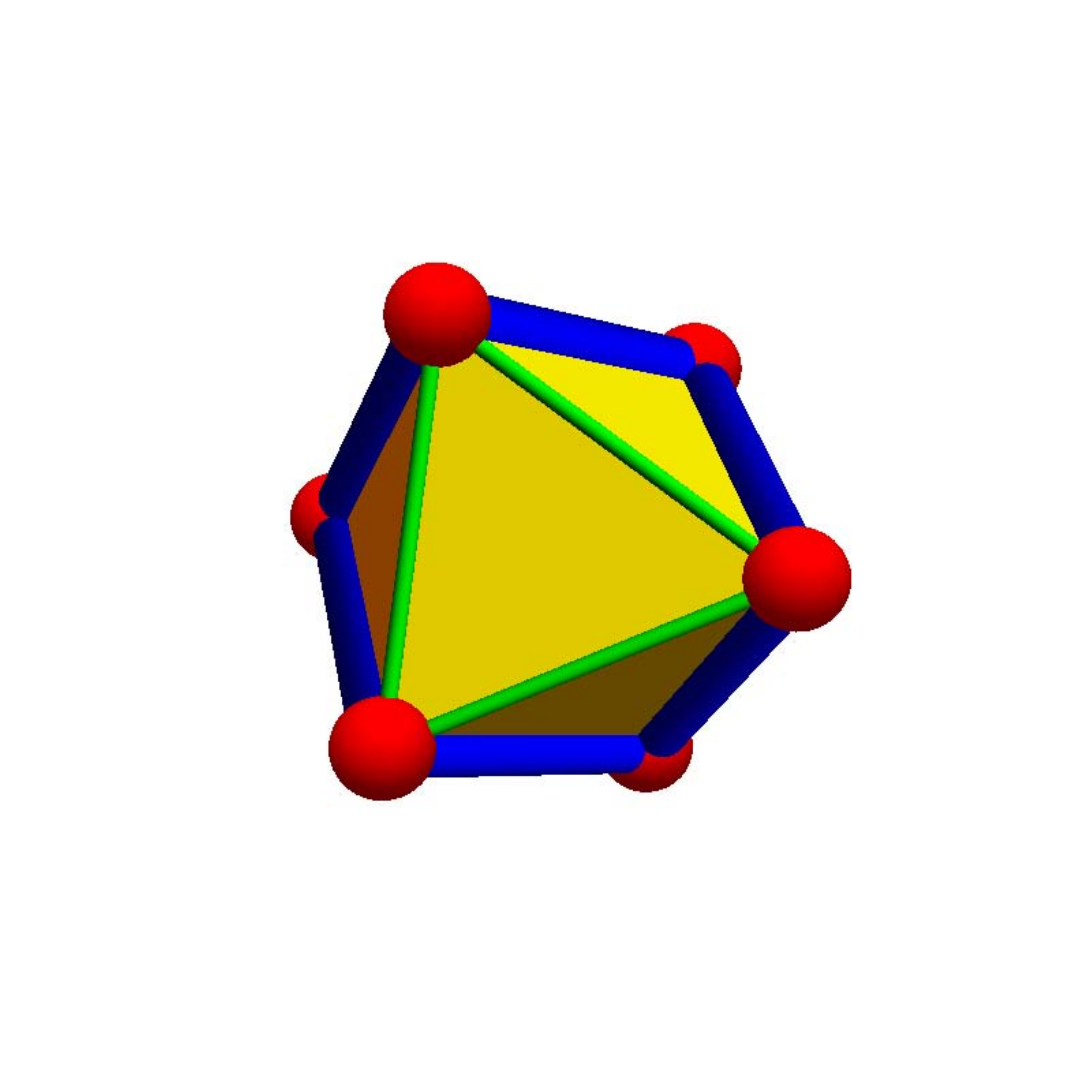}}
\scalebox{0.14}{\includegraphics{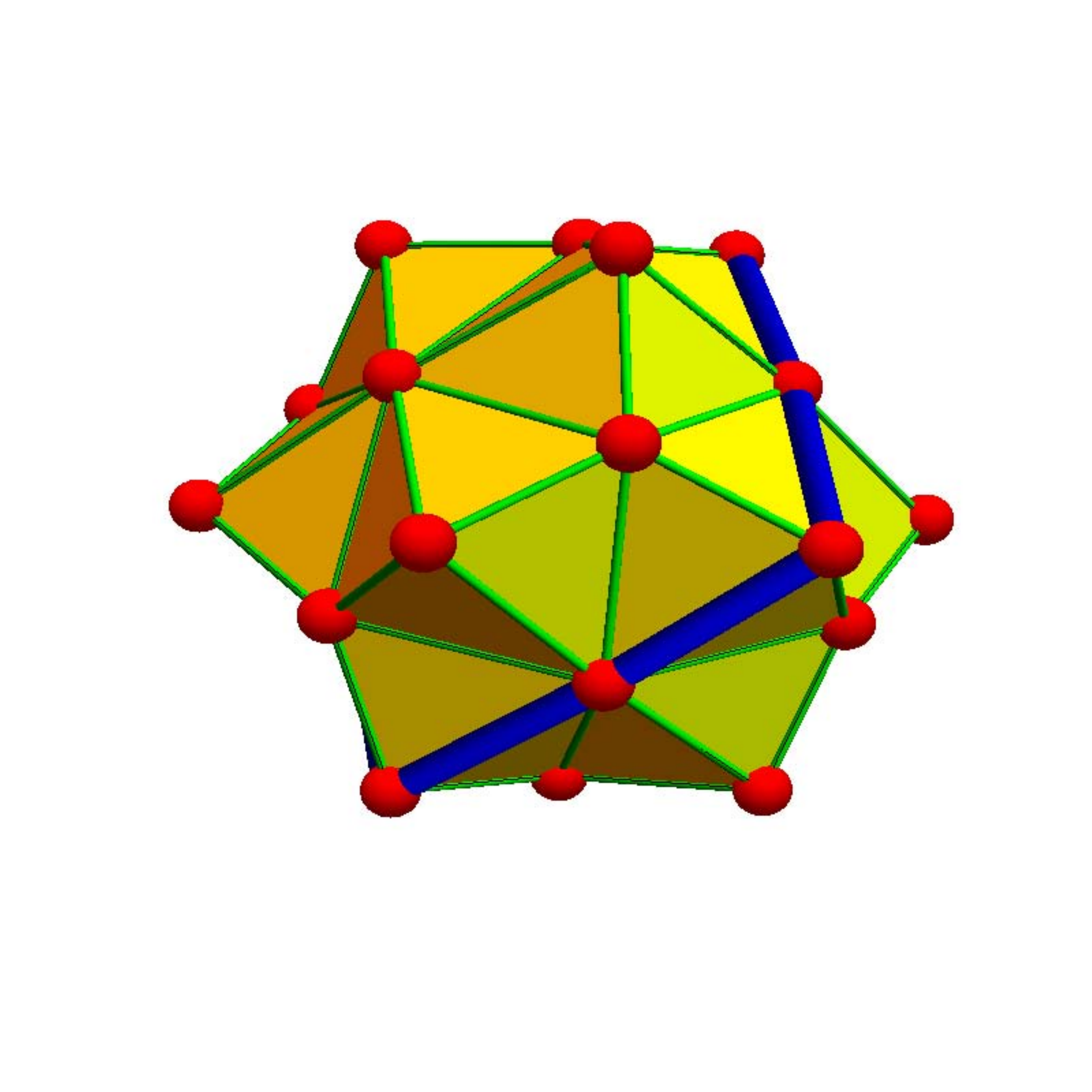}}
\scalebox{0.14}{\includegraphics{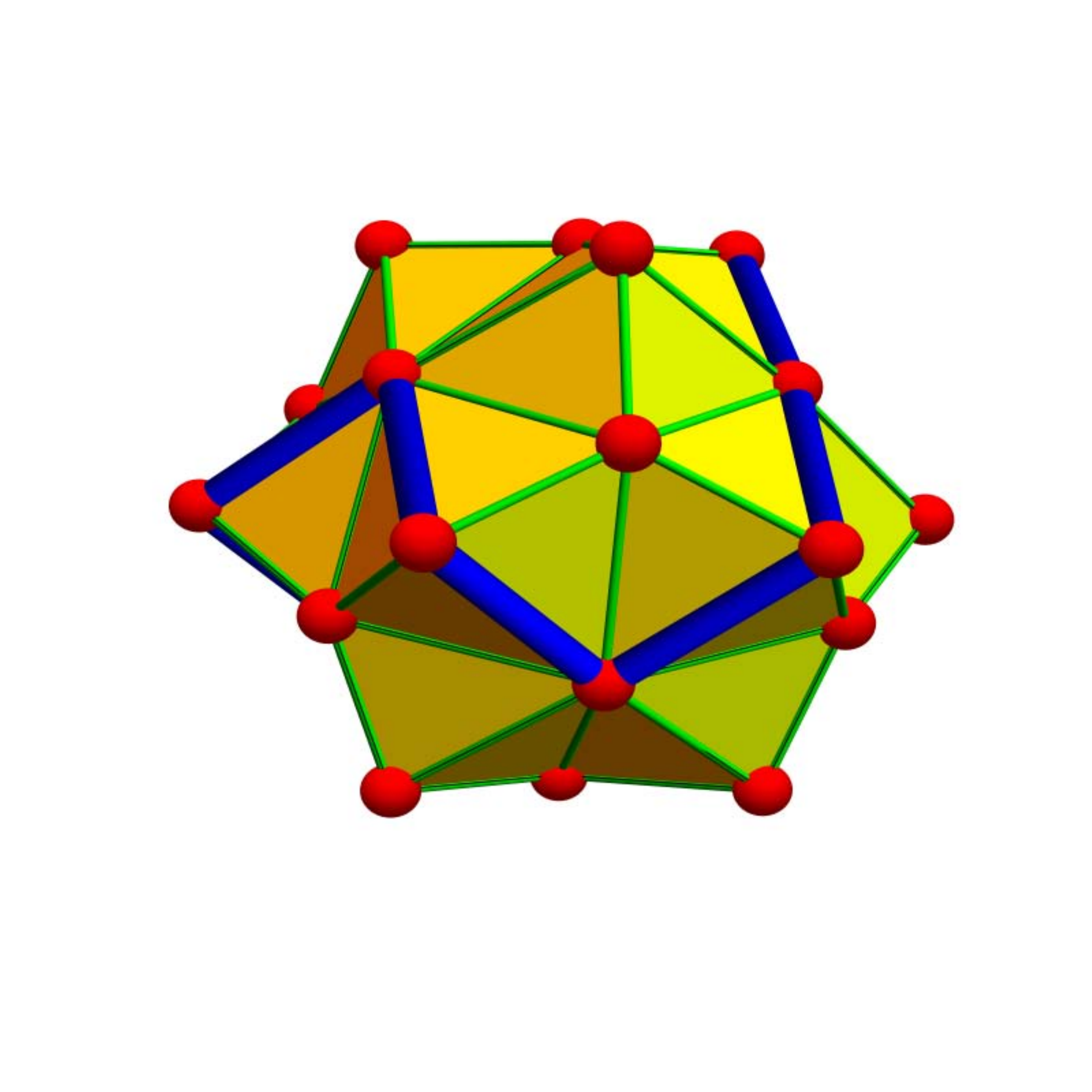}}
\scalebox{0.14}{\includegraphics{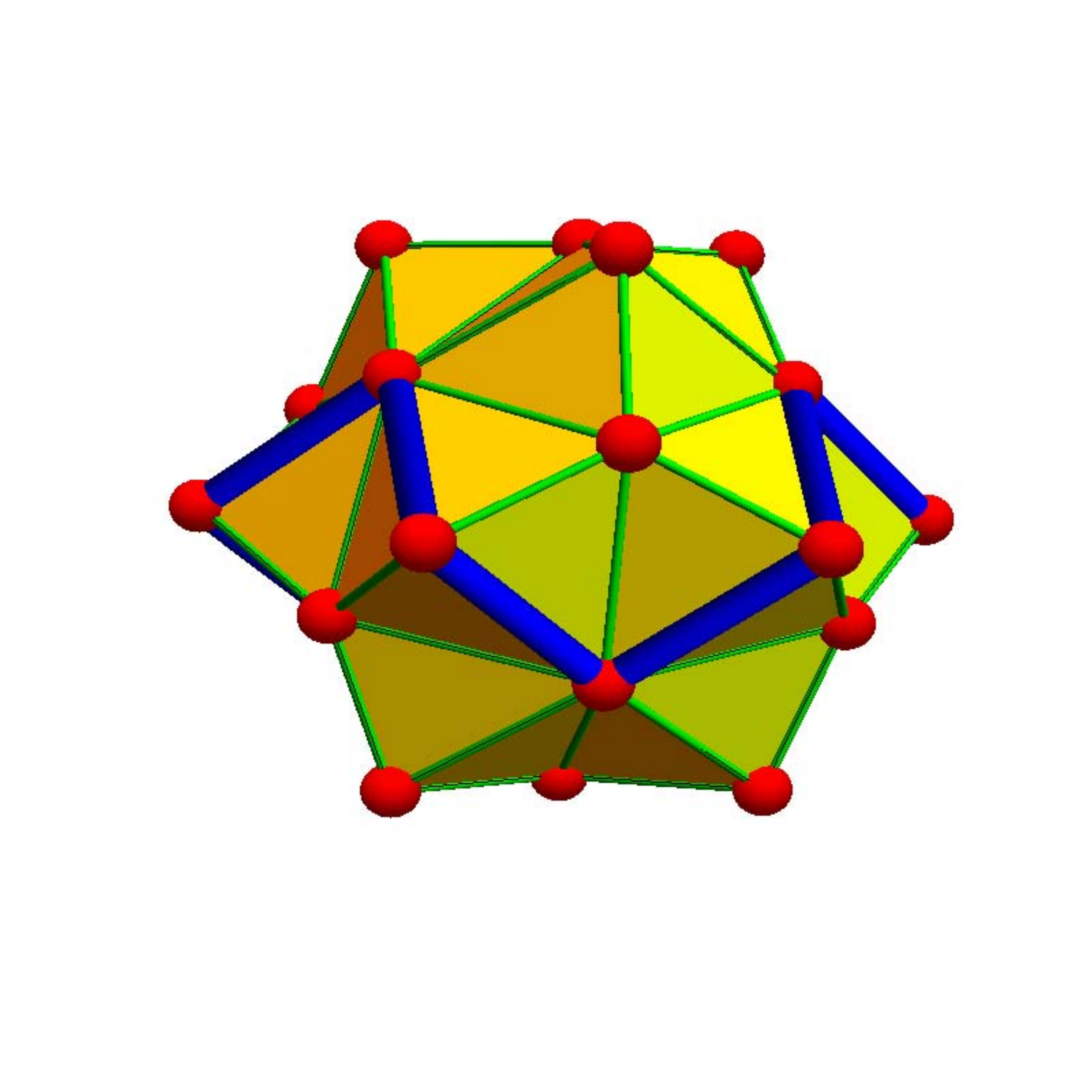}}
\caption{
\label{octahedron}
We see a homotopy deformation of an embedded $C_4$ graph $H_1$ in an 
octahedron $G$ to a Hamiltonian path $H_2$ of length $6$ in the $2$-sphere $G$.
The initial curve $H_1$ divides the octahedron into two complementary wheel graph domains $A,B$.
After two steps, the deformed $H_2$ is no more an embedding as its closure is $G$.
The deformations are obtained by taking a triangle $t$ containing an edge of $H$ and forming 
$H \to H \Delta t$. 
In the lower row, we see the situation on the simplex level, where the octahedron $G$ 
has become the Catalan solid $G_1$ and the deformed spheres remain sphere embeddings.
While our theorem will be formulated for
embedded spheres in $G$, the proof of the Schoenflies case requires the ability to
perform homotopy deformations and have a picture in which spheres remain embedded. 
The complementary domains $A,B$ in the lower case are $2$-balls: 
$9$ interior vertices representing $4$ triangles,
$4$ edges and the vertex of each pyramid at first.  After the deformation there are 
$7$ interior vertices representing the $4$ triangles and $3$ edges of the domain. 
On the simplex level the deformation steps are done using unit balls at vertices 
belonging to original triangles.
}
\end{figure}

When translating the theorem to the discrete, one has to specify what a ``sphere" and
what an ``embedding" of a sphere in an other sphere is in graph theory.
We also need notions of ``intersection numbers" of complementary spheres as well as
workable notions of ``homotopy deformations" of spheres within an other sphere.
Once the definitions are in place, the proof can be done by induction with respect to the dimension $d$.
Intersection numbers and the triviality of the fundamental group allow to show that the two components in the
$(d-1)$-dimensional unit sphere of a vertex in $G$ lifts to two components in the $d$-dimensional case:
to prove that there are two complementary components one has to verify that the intersection number of a 
closed curve with the
$(d-1)$-sphere $H$ is even. This implies that if a curve from a point in $A$ to $B$ in the smaller
dimensional case with intersection number $1$ is complemented to become a closed curve in $G$, also the new
connection from $A$ to $B$ has an odd intersection number, preventing the two regions to be the same.
The discrete notions are close to the ``parity functions" used by Jordan explained in \cite{HalesJordanProof}.
Having established that the complement of $H$ has exactly two components $A,B$,
a homotopy deformation of $H$ to a simplex in $A$ or to a simplex $B$ will establish the Schoenflies statement
that $A,B$ are $d$-balls. We will have to describe the homotopy on the regularized simplex level to
regularizes things. The Jordan-Schoenflies theme is here used as a test bed
for definitions in graph theory. Indeed, we rely on ideas from \cite{KnillTopology,KnillProduct}. \\

Lets look at the Jordan case, an embedding of a circle in a 2-sphere. A naive version of a discrete
Jordan theorem is the statement that a ``simple closed curve in a discrete sphere divides the complement into two regions".
As illustrated in Figure~(\ref{octahedron}), this only holds with a grain of salt.
Take the octahedron graph $G$ and a closed Hamiltonian path which visits all vertices, leaving no
complement. It is even possible for any $m \geq 0$ to construct a discrete sphere $G$ and a curve $H$
in $G$ such that the complement has $m$ components as the curve can bubble off regions by coming close to itself without
intersecting itself but still dividing up a disk from the rest of the sphere.
As usual with failures of discrete versions of continuum results, the culprit is the definition, in this case, it is
the definition of an ``embedding". The example of a Hamiltonian path is more like a discrete Peano
curve in the continuum as it visits all vertices but without hitting all directions or area forms of the plane.
We need to make sure that also the closure of the embedding is the same curve. In the graph theoretical
case, we ask that the graph generated in $G$ by the vertex set of $H$ is still a sphere.
For curves in a 2-sphere for example we have to ask that the embedded curve features no triples of vertices forming a
triangle in $G$. There is an other reinterpretation to 
make the theorem true for simple closed curves as we will see in the proof: there is a regularized picture on the
simplex level, where two complementary domains always exist.
We need the simplex regularization because embeddings do not play well with deformations: When making
homotopy deformation steps, we in general lose the property of having an embedded sphere.
Already in discrete planar geometry, where we work in the flat $2$-dimensional hexagonal lattice
\cite{elemente11} one has to invoke rather subtle definitions to get conditions which make things work.
Discrete topological properties very much depend on the definitions used. 
It would be possible to build a homotopy deformation process which honors the embeddability, but it could be 
complicated. The construction of a graph product \cite{KnillProduct} provided us with an elegant resolution
of the problem: we can watch the deformation of an ``enhanced embedding" $H_1$ in $G_1$, where $H_1$ is the graph
obtained from $H$ by taking all the complete subgraphs of $H$ as vertices and connecting two of them if
one is a subgraph of the other.
It turns out that even if $H$ is only a subgraph of $G$, the graph $H_1$ is an embedding of $G_1$.
This holds in particular if $H$ is a Hamiltonian path in $G$, the closure of $H_1$, the graph
generated by the vertex set of $H_1$ remains geometric in $G_1$.
The Cartesian product \cite{KnillProduct} allows also to look at homotopy groups geometrically:
a deformation of a curve in a $2$-sphere $G$ for example
is now described as a geometric surface in the $3$-dimensional solid cylinder $G \times L_n$, where
$L_n$ is the $1$-dimensional line graph with $n$ vertices and $(n-1)$ edges. 
This is now close to the definition of homotopy of a curve using a function $F(t,s)$ in two variables 
so that $F(t,0)$ is the first curve and $F(t,1)$ the second.  \\

This paper is not the first take on a discrete Jordan theorem as
various translations of the Jordan theorem have been constructed to the discrete. They are
all different from what we do here: \cite{LiChen} uses notions of discrete geometry,
\cite{HermanDigitalSpaces} looks at Jordan surfaces in the geometry of digital spaces, 
\cite{Evako2013} proves a Jordan-Brouwer result in the discrete lattice $Z^d$,
\cite{Surowka} extends a result of Steinhaus on a $m \times n$
checkerboard $G$, a minimal king-path $H$ connecting two not rook-adjacent elements of
the boundary divides $G$ into two components. A variant of this on a hexagonal board
\cite{Gale} uses such a result to prove the Brouwer fixed point theorem in two dimensions.
\cite{VinceLittle} deals more generally with graphs which can have multiple connections.
The result essentially establishes what we do in the special case $d=2$.
\cite{Stout} looks at two theorems in $L \times L$ or $L \star L$, where $\times$
is the usual Cartesian product and $\star$ the tight product. In both cases,
the complement of $C$ has exactly two path components. The paper \cite{NeumannLaraWilson}
deals with graphs, for which every unit sphere $S(x)$ is a Hamiltonian graph. Also here,
a closed simple path $C$ in a connected planar graph $G$ divides
the complement into exactly two components. \cite{VinceLittle} deals with graphs which can have
multiple connections. On $2$-spheres, it resembles the Jordan case $d=2$ covered here
as a special case.  \\

The main theorems given here could readily be derived from the continuum by building
a smooth manifold from a graph, and then use the Jordan-Brouwer-Schoenflies rsp. Mazur-Morse-Brown 
theorem. The approach however is different as no continuum is involved: all definitions and steps
are combinatorial and self-contained and could be accepted by a mathematician avoiding  axioms
invoking infinity. Sometimes, constructabily can be a goal \cite{ConstructiveJordan}. 
New is that we can prove Jordan-Brouwer-Schoenflies entirely within graph
theory using a general inductive graph theoretical notion of ``sphere" \cite{Evako1994}. Papers
of Jordan, Brouwer or Alexander show that the proofs in the continuum often 
deal with a combinatorial part only and then use an approximation argument to get the general case. 
As the Alexander horned sphere, the open Schoenfliess conjecture or questions about triangulations related 
to the Hauptvermutung show, the approximation part can be difficult within topology and we don't go into it.
Our proof remains discrete but essentially follows the arguments from the continuum by 
defining intersection numbers and use induction with respect to dimension. The induction proof is 
possible because of the recursive definition of spheres and seems not
have been used in the continuum, nor in discrete geometry or graph theory. 
But what really makes the theorem go, is to watch the story on the simplex level, where 
geometric graphs $G$ remain geometric and where sub-spheres $H$ of $G$ can be watched as 
embedded spheres $H_1$ in $G_1$. The step $G \to G_1$ can be used in the theory of triangulations 
because it has a ``regularizing effect". Given a triangulation $G$ described as a graph, the 
new triangulation $G_1$ has nicer properties like that the unit sphere of a point is now a 
graph theoretically defined sphere. 

\section{Definitions}

\begin{defn}
A subset $W$ of the vertex set $V=V(G)$ of a graph $G$ {\bf generates} a subgraph $(W,F)$ of $G$,
where $F$ is defined as the subset $\{ (a,b) \in E \; | \; a \in W, b \in W \; \}$ of the edge set of $E$. 
The {\bf unit sphere} $S(x)$ of a vertex $x$ in a graph is the subgraph of $G$ generated by all 
neighboring vertices of $x$. The {\bf unit ball} $B(x)$ is the subgraph of $G$ generated by the 
union of $\{x\}$ and the vertex set of the unit sphere $S(x)$. 
\end{defn}

If $H$ is a subgraph of $G$, one can think of the graph generated by $H$ within $G$ as a ``closure"
of $H$ within $G$. It is in general larger than $H$. For example, the closure of the line 
graph $H=( (a,b,c), (a,b),b,c))$ within the complete graph $G$ with vertex set $V=\{ a,b,c \; \}$ 
is equal to $G$. 

\begin{defn}
Starting with the assumption that the one point graph $K_1$ is contractible,
recursively define a finite simple graph $G=(G,V)$ to be {\bf contractible} if it 
contains a vertex $x$ such that its unit sphere $S(x)$
as well as the graph generated by $V \setminus \{x\}$ are both contractible. 
\end{defn}

\begin{defn}
A complete subgraph $K_{k+1}$ of $G$ will also be denoted {\bf $k$-dimensional simplex}.
If $v_k(G)$ is the set of $k$-dimensional simplices in $G$,
then the {\bf Euler characteristic} of $G$ is defined as $\chi(G) = \sum_{k=0} (-1)^k v_k(G)$. 
\end{defn}

{\bf Examples}. \\
{\bf 1)} The Euler characteristic of a contractible graph is always $1$ as removing one vertex
does not change it. One can use that $\chi(A \cap B) = \chi(A) + \chi(B) - \chi(A \cap B)$
and use inductively the assumption that unit balls as well as the spheres $S(x)$ in a reduction are
both contractible. \\
{\bf 2)} Also by induction, using that a unit sphere of a $d$-sphere is $(d-1)$-sphere, one verified
that $\chi(G) = 1+(-1)^d$ for a $d$-sphere. This holds also in the case $d=-1$ as the 
Euler characteristic of the empty graph is $0$. The Euler characteristic of an octahedron for
example is $6-12+8=2$ as there are $6$ vertices, 12 edges and 8 triangles. 
The cube graph $G$ is not a sphere as the unit sphere at each vertex is $P_3$. Its Euler characteristic 
is $8-12=-4$. $G$ is a sphere with $6$ holes punched in, leaving only a $1$-dimensional
skeleton. A $2$-dimensional cube can be constructed as the boundary $\delta B$ 
of the solid cube $B=L_2 \times L_2 \times L_2$ defined in \cite{KnillProduct}.

\begin{defn}
Removing a vertex $x$ from $G$ for which $S(x)$ is contractible is called a 
{\bf homotopy reduction step}. The inverse operation of performing a suspension over
a contractible subgraph $H$ of $G$ by adding a new vertex $x$ and connecting $x$
to all the vertices in $H$ is called a {\bf homotopy extension step}. 
A finite composition of reduction or extension steps is called a 
{\bf homotopy deformation} of the graph $G$.
\end{defn}

{\bf Remarks.}\\
{\bf 1)} Since simple homotopy steps removing or adding vertices with contractible $S(x)$
do not change the Euler characteristic, it is a function on the homotopy 
classes \cite{I94}. If we add a vertex for which $S(x)$ is not contractible, we add a vertex with 
index $1-\chi(S(x))$ which is a Poincar\'e-Hopf index \cite{poincarehopf}. Given a function 
on the vertex set giving an ordering on the build up of the graph one gets the Poincar\'e-Hopf theorem. \\
{\bf 2)} Examples like the Bing house or the Dunce hat show that
homotopic to a one-point graph $K_1$ is not equivalent to contractible: some graphs 
might have to be expanded first before being contractible. This is relevant in 
Lusternik-Schnirelmann category \cite{josellisknill}. \\
{\bf 3)} The discrete notion of homotopy builds an equivalence relation on graphs in 
the same way homotopy does in the continuum. The problem of classifying homotopy types can not
be refined as one can ask how many types there are on graphs with $n$ vertices. \\
{\bf 4)} The discrete deformation steps were
put forward by Whitehead \cite{Whitehead} in the context of cellular complexes. 
The graph version is due to \cite{I94} and was simplified in \cite{CYY}. \\
{\bf 5)} The definition of $d$-spheres and $d$-balls is inductive. Introduced in
\cite{knillgraphcoloring,knillgraphcoloring2} we were puzzled then why this natural setup has 
not appeared before. But it actually has, in the context of digital topology \cite{Evako2013}
going back to \cite{Evako1994} and we should call such spheres Evako spheres. 
Alexander Evako is a name shortcut for Alexander Ivashchenko who also introduced homotopy 
to graph theory and also as I only learned now while reviewing his work found in \cite{I94a}
a similar higher dimensional Gauss-Bonnet-Chern theorem \cite{cherngaussbonnet} in graph theory.

\begin{defn}
The induction starts with the assumption that the empty graph is the only $(-1)$-sphere 
and that the graph $K_1$ is the only $0$-ball. 
A graph is called a {\bf d-sphere}, if all its unit spheres $S(x)$ are $(d-1)$ spheres and 
if there exists a vertex $x \in V$ such that the graph generated by $V \setminus \{x\}$ 
is contractible. A contractible graph $G$ is called a {\bf $d$-ball}, if one can partition its
vertex set $V$ into two sets ${\rm int}(V) = \{ x \in V \; | \; S(x)$ is a $(d-1)$ sphere $\}$ and 
$\delta(V) = \{ x \in V \; | \; S(x)$ is a $(d-1)$-ball $\}$ such 
that $\delta V$ generates a $(d-1)$-sphere called $\delta G$, the {\bf boundary of $G$}.
\end{defn}

By induction, $\chi(G)= 1+(-1)^d$  if $G$ is a $d$-sphere and $\chi(G)=1$ if $G$ is a $d$-ball. \\

{\bf Examples.} \\
{\bf 1)} The boundary sphere of the $0$-ball $K_1$ is the $(-1)$ sphere $\emptyset$, the empty graph.  \\
{\bf 2)} The boundary sphere of the line graph $L_n$ with $ n>2$ vertices, is the $0$-sphere $P_2$. 
Line graphs are $1$-balls. \\
{\bf 3)} The boundary sphere of the wheel graph $W_n$ with $n \geq 4$ is the circular graph $C_n$. 
The wheel graph is an example of a $2$-ball and $C_n$ is an example a $1$-sphere.  \\
{\bf 4)} The boundary sphere of the $3$-ball obtained by making 
a suspension of a point with the octahedron is the octahedron itself. \\
{\bf 5)} We defined in \cite{knillgraphcoloring2} Platonic spheres as $d$-spheres for which all 
unit spheres are Platonic $(d-1)$-spheres. This definition has been given already by Evako. 
The discrete Gauss-Bonnet-Chern theorem \cite{cherngaussbonnet}
easily allows a classification: all $1$-dimensional spheres $C_n, n>3$ are
Platonic for $d=1$, the Octahedron and Icosahedron are the two Platonic $2$-spheres, 
the sixteen and six-hundred cells are the Platonic $3$-spheres. As we only now realize while looking over
the work of Evako, we noticed that the Gauss-Bonnet theorem \cite{cherngaussbonnet} appears in 
\cite{I94a}. The $d$-cross polytop $P_2 \star P_2 \star \cdots \star P_2$ obtained by repeating 
suspension operations from the $0$-sphere $P_2$ is the unique Platonic $d$-sphere for $d>3$.  

\begin{defn}
The dimension of a graph is inductively defined as ${\rm dim}(G)=1+\sum_{x \in V} {\rm dim}(S(x))/v_0$, where
$v_0=|V|$ is the cardinality of the vertex set. The induction foundation is that the empty graph $\emptyset$
has dimension $0$. The dimension of a finite simple graph is a rational number. 
\end{defn}

{\bf Remarks.} \\
{\bf 1)} This inductive dimension for graphs has appeared first in \cite{elemente11,randomgraph}. It is motivated by 
the Menger-Uryson dimension in the continuum but it is different because with respect to the metric on a graph, the
Menger-Uryson dimension is $0$. \\
{\bf 2)} Much of graph theory literature ignores the Whitney simplex 
structure and treat graphs as one dimensional simplicial complexes. The inductive dimension behaves very much 
like the Hausdorff dimension in the continuum, the product \cite{KnillProduct} is super additive
${\rm dim}(H \times K) \geq {\rm dim}(H) + {\rm dim}(K)$ like Hausdorff
dimension of sets in Euclidean space. \\
{\bf 3)} There are related notions of dimension like \cite{SmythTsaurStewart}, who look at the largest dimension
of a complete graph and then extend the dimension using the usual Cartesian product. This is not equivalent
to the dimension given above. \\

{\bf Examples.} \\
{\bf 1)} The complete graph $K_n$ has dimension $n-1$.
{\bf 2)} The dimension of the house graph obtained by gluing $C_4$ to $K_3$ along an edge is
$22/15$: there are two unit spheres of dimension $0$ which are the base points,
two unit spheres of dimension $2/3$ corresponding to the two lower roof points and 
one unit sphere of dimension $1$ which is the tip of the roof. \\
{\bf 3)} The expectation $d_n(p)$ of dimension 
on Erdoes-Renyi probability spaces $G(n,p)$ of all subgraphs in $K_n$ for which
edges are turned on with probability $p$ can be computed explicitly. It is an explicit
polynomial in $p$ given by $d_{n+1}(p) = 1+\sum_{k=0}^n \B{n}{k} p^k (1-p)^{n-k} d_k(p)$ 
\cite{randomgraph}.

\begin{defn}
A finite simple graph $G=(V,E)$ is called a {\bf geometric graph} of dimension $d$ if every unit sphere $S(x)$ is 
a $(d-1)$-sphere. A finite simple graph $G$ is a {\bf geometric graph with boundary}
if every unit sphere $S(x)$ is either a $(d-1)$ sphere or a $(d-1)$-ball. The subset of the vertex set $V$,
in which $S(x)$ is a ball generates the {\bf boundary graph} of $G$. We denote it by $\delta G$ and assume it
to be geometric of dimension $(d-1)$. 
\end{defn} 

{\bf Examples.} \\
{\bf 1)} By definition, $d$-balls are geometric graphs of dimension $d$ and $d$-spheres are 
geometric graphs of dimension $d$. \\
{\bf 2)} For every smooth $d$-manifold one can look at triangulations which are geometric 
$d$-graphs. The class of triangulations is much larger. \\

{\bf Remarks.} \\
{\bf 1)} Geometric graphs play the role of manifolds. By embedding each discrete unit ball $B(x)$
in an Euclidean space and patching these charts together one can from every geometric graph $G$ generate
a smooth compact manifold $M$. Similarly, if $G$ is a geometric graph with boundary, one can ``fill it up"
to generate from it a compact manifold with boundary. \\
{\bf 2)} The just mentioned obvious functor from geometric graphs to manifolds 
is analogue to the construction of manifolds from simplicial complexes. We don't want to 
use this functor for proofs and remain in the category of graphs. One reason is that many 
computer algebra systems have the category of graphs built in as a fundamental data structure. An other reason
is that we want to explore notions in graph theory and stay combinatorial. \\
{\bf 3)} Graph theory avoids also the rather difficult notion of triangularization. Many triangularizations
are not geometric. In topology, one would for example consider the tetrahedron graph $K_4$ as a triangulation
of the $2$-sphere. But $K_4$ is not a sphere because unit spheres are $K_3$ which are not spheres etc.
And $K_4$ is also not a ball. While it is contractible, it coincides with its boundary as it does not have 
interior points. Free after Euclid one could say that $K_{d+1}$ is a {\bf $d$-dimensional point}, as 
it has no $d$-dimensional parts.  \\
{\bf 4)} Every graph defines a simplicial complex, which is sometimes called
the Whitney complex, but graphs are a different category than simplicial complexes. 
Algebraically, $x+y+z+xy+yz+xz$ is a simplicial complex which is not a graph as it does 
not contain the triangle simplex. 
The graph completion $K_3$ described by $x+y+z+xy+yz+zx+xyz$ however, is a graph. 

\begin{defn}
A geometric graph of dimension $d$ is called {\bf orientable} if one can assign a
permutation to each of its $d$-dimensional simplices in such a way that one has 
compatibility of the induced permutations on the intersections of neighboring simplices.
For an orientable graph, there is a constant non-zero $d$-form $f$, called {\bf volume form}.
It satisfies $df=0$ for the exterior derivative $d$ but which can not be written as $dh$.
\end{defn}

{\bf Remarks.} \\
{\bf 1)} A connected orientable $d$-dimensional
geometric graph has a $1$-dimensional cohomology group $H^d(G)$.
This is a special case of Poincar\'e duality, assuring an isomorphism of 
$H^{n-d}(G)$ with $H^n(G)$, which holds for all geometric graphs. \\
{\bf 2)} For geometric graphs, an orientation induces an orientation on the boundary. 
Stokes theorem for geometric graphs with boundary is $\int_G df = \int_{\delta G} f$ 
\cite{knillcalculus}, as it is the definition on each simplex. \\

{\bf Examples.} \\
{\bf 1)} All $d$-spheres with $d \geq 1$ are examples of orientable graphs. \\
{\bf 2)} If a $d$-sphere $G$ has the property that antipodal points have distance at least 4, then 
the antipodal identification map $T$ factors out a geometric graph $G/T$, we get a discrete projective 
space $P^d$. For even dimensions $d$, this geometric graph is not orientable. \\
{\bf 3)} The cylinder $C_n \times L_m$, with $n \geq 4, m \geq 2$ is orientable. One can get a sphere,
a projective plane, a Klein bottle or a torus from identifications of the boundary of $L_n \times L_m$ 
in the same way as in the continuum. For example, the graph $L_3 \times L_3$ is obtained by taking the 
25 polynomial monoid entries of $(a+ab+b+bc+c) (u+uv+v+vw+w)$ as vertices and connecting 
two if one divides the other. \\

The following definition of the graph product has been given in \cite{KnillProduct}:

\begin{defn}
A graph $G=(V,E)$ with vertex set $V=\{x_1,\dots,x_n \; \}$ defines a polynomial
$f_G(x_1,\dots,x_n) = \sum_{x} x$, where $x = x_1^{k_1} \cdots x_n^{k_n}$
with $k_i \in \{0,1\}$ represents a complete subgraph of $G$.
The polynomial defines the graph $G_1$ with vertex set $V=\{ x \; | \; {\rm simplex} \; \}$ and
edge set $E=\{ (x,y) \; | \; x|y \; {\rm or} \; y|x \}$,
where $x|y$ means $x$ divides $y$, geometrically meaning that $x$ is a sub-simplex of $y$.
Given two graphs $H,K$, define its graph product $H \times K = G(f_H \cdot f_K)$.
The graph $G_1 = G \times K_1$ is called the {\bf enhanced graph} obtained from $G$.
\end{defn}

{\bf Examples.} \\
{\bf 1)} For $G=C_4$ we have $f_G=x+xy+y+yz+z+zw+w+wx$ and $G_1=C_8$. \\
{\bf 2)} For $G=K_3$, we have $f_G=x+y+z+xy+yz+zx+xyz$ and $G_1=W_6$. \\
{\bf 3)} For a graph without triangles, $G_1$ is homeomorphic to $G$ in the classical sense. \\

{\bf Remarks.}\\
{\bf 1)} The graph $G_1$ has as vertices the complete subgraphs of $G$. Two simplices are connected if 
and only one is contained in the other. If $G$ is geometric, then $G_1$ is geometric. For example, if 
$G$ is the octahedron graph with $v_0=6$ vertices, $v_1=12$ edges and $v_2=8$ triangles,
then $G_1$ is the graph belonging to the Catalan solid with $v_0+v_1+v_2=26$ vertices 
and which has triangular faces. Also $G_1$ is a $2$-sphere. \\
{\bf 2)} In full generality, the graph $G_1$ is homotopic to $G$ and has therefore
the same cohomology. The unit balls of $G_1$ form a weak \v{C}ech cover in the sense that the nerve 
graph of the cover is the old graph $G$ and two elements in the cover are linked, 
if their intersection is a $d-1$ dimensional graph. To get from $G_1$ to $G$, successively shrink
each unit ball of original vertices analogue to a Vietoris-Begle theorem.
If $G$ is geometric, it is possible to modify the cover to have it homeomorphic in the sense of \cite{KnillTopology} 
so that if $G$ is geometric then $G$ and $G_1$ are homeomorphic. 
As the dimension of $G_1$ can be slightly larger in general, the property that $G_1$ and $G$ are
homeomorphic for all general finite simple graphs does not hold. \\

{\bf Examples.} \\
{\bf 1)} A Hamiltonian path $H$ in the icosahedron $G$, a $2$-sphere,
does not leave any room for complementary domains. However, the graph $H_1$ in $G_1$ divides
$G_1$ into two regions. The graph $G_1$ by the way is the disdyakis triacontahedron, a 
Catalan solid with $62$ vertices. \\
{\bf 2)} Let $G$ be the octahedron, a $2$-sphere with $6$ vertices. 
Assume $a,b,c,d$ are the vertices of the equator sphere
$C_4$ and that $n,p$ are the north and south pole. Define the finite simple curve 
$a,b,c,d,s,a$ of length $5$. It is a circle $H$ in $G$, but it is not embedded. 
In this case, the complement of $H$ has only one region $A=\{n\}$. The Jordan-Brouwer theorem
is false. But its only in this picture. If we look at the 
embedding of $H_1$ in $G_1$, then this is an embedding which divides 
the Catalan solid $G_1$ into two regions $A_1,B_1$. 

\begin{defn}
A graph $H=(W,F)$ is a subgraph of $G=(V,E)$ if $W \subset W$ and $F \subset E$. 
Let $H,G$ be geometric graphs. 
A graph $H$ is {\bf embedded} in an other graph $G$ if $H$ is a subgraph such that
for any collection of unit spheres $S(x_j)$ in $G$, where $x_j$ is in the vertex set of some $K_k$,
the intersection $H \cap \bigcap_{j=1}^k S(x_j)$ is a sphere. 
\end{defn}

{\bf Examples.} \\
{\bf 1)} A $0$-sphere $H$ is embedded in a geometric graph if the two vertices
are not adjacent. A $1$-sphere $H$ is embedded if two vertices of $H$ are connected in $G$
if and only if they are connected in $H$. \\
{\bf 2)} A graph $C_k$ can only be immersed naturally in $C_n$ if $n$ divides $k$ and $n \geq 4$.
It is a curve winding $k/n$ times around $C_n$. 
For example, $C_{15}$ can be immersed in $C_5$ and described by the homomorphism
alebraically described by $x \in Z_{15} \to x \in Z_5$ if $C_n$ is identified with 
the additive group $Z_n$. In this algebraic setting dealing with the fundamental group
it is better to look at the graph homomorphism rather than the physical image of the 
homomorphism. \\
{\bf 3)} If $H$ is embedded in $G$, then also $H_1$ is an embedding of $G_1$. 
But $H_1$ is an embedding in $G_1$ even if $H$ is only a subgraph of $G$. 
See Figure~(\ref{octahedron}). \\

The following definition places the sphere 
embedding problem into the larger context of knot theory: 

\begin{defn}
A {\bf knot of co-dimension $k$} is an embedding of a $(d-k)$-sphere $H$ 
in a $d$-sphere $G$. 
\end{defn}

{\bf Remarks.} \\
{\bf 1)} A knot can be called {\bf trivial} if it is homeomorphic to the
$(d-k)$-cross polytop embedded in the $d$-cross polytop in the sense of 
\cite{KnillTopology}.  \\
{\bf 2)} As we don't yet know whether there are graphs homeomorphic to spheres which are not spheres
or whether there are graphs homeomorphic to balls which are not balls.  \\
{\bf 3)} The Jordan-Brouwer-Schoenflies
theorem can not be stated in the form that a $(d-1)$-sphere in a $d$ sphere is trivial. 

\begin{defn}
A {\bf closed curve} in a graph is a sequence of vertices $x_j$ with $(x_j,x_{j+1}) \in E$
and $x_n=x_0$. A {\bf simple curve} in the graph is the image of an injective graph homomorphism $L_n \to G$,
where $L_n$ is the line graph. 
A {\bf simple closed curve} is the image of an injective homomorphism $C_n \to G$ with $n \geq 3$. 
It is an embedding of the circle if the image generates a circle. In general, a simple closed curve is not
an embedding of a circular graph.
\end{defn}

{\bf Example.} \\
{\bf 1)} A Hamiltonian path is a simple closed curve in a graph $G$ which visits all vertices
exactly once. Such a path is not an embedding if $G$ has dimension larger than $1$
as illustrated in Figure~(\ref{octahedron}). \\
{\bf 2)} While we mainly deal with geometric graphs, graphs for which all unit spheres are 
spheres, the notion of a simple closed curve or an embedding can be generalized for any pair of 
finite simple graphs $H,G$: if $H$ is a subgraph of $G$, then there is an injective graph
homomorphism from $H$ to $G$. If the intersection of an intersection of finitely many neighboring unit spheres
with $H$ is a sphere, we speak of an embedding.  

\begin{defn}
An embedding of a graph $H$ in $G$ {\bf separates} $G$ into two graphs $A,B$ if 
$A \cap B = H, A \cup B = G$ and $A \setminus H, B \setminus H$ 
are disjoint nonempty graphs. The two graphs $A,B$ are called 
{\bf complementary subgraphs} of the embedding $H$ in $G$. 
\end{defn}

{\bf Examples.} \\
{\bf 1)} The empty graph separates any two connectivity components of a graph. \\
{\bf 2)} By definition, if a graph $G$ is $k$-connected but $(k+1)$-disconnected,
there is a graph $H$ consisting of $k$ vertices such that $H$ separates $G$.  \\
{\bf 3)} For $G=K_n$, there is no subgraph $H$ which separates $G$. \\
{\bf 4)} The join of the $1$-sphere $C_4$ with the $0$-sphere $P_2$ is a 2-sphere, 
the disdyakis dodecahedron, a Catalan solid which by the way is $G \times K_1$, 
where $G$ is the octahedron. See Figure~(\ref{octahedron}). \\
The $C_4$ subgraph embedded as the equator in the octahedron $G$ 
separates $G$ into two wheel graphs $A,B$. \\

{\bf Remarks.} \\
{\bf 1)} A knot $H$ of co-dimension $2$ in a $3$-sphere $G$ is a closed simple 
curve embedded in $G$. Classical knots in $R^3$ can be realized in graph theory
as knots in $d$-spheres, so that the later embedding is the same as in the discrete version. 
The combinatorial problem is not quite equivalent however as it allows refined questions
like how many different knot types there are in a given $3$-sphere $G$ and how many topological 
invariants are needed in a given $3$-sphere to characterize any homotopy type or a knot
of a given co-dimension. . \\
{\bf 2)} An embedded curve has some ``smoothness". In a $2$-sphere for example, it intersects 
every triangle in maximally 2 edges. 
An extreme case is a Hamiltonian graph $H$ inside $G$ which by definition generates the 
entire graph. A Hamiltonian path $H$ which is a subgraph of 
a higher dimensional graph plays the role of a space-filling Peano curve in 
the continuum, a continuous surjective map from $[0,1]$
to the $2$-manifold $M$. For a simple curve $H$ which is not an embedding in $G$, the 
complement of $H$ can therefore be empty. 

\begin{defn}
A {\bf simple homotopy deformation} of a $(d-1)$-sphere $H$ in a $d$-sphere $G$ is
obtained by taking a $d$-simplex $x$ in $G$ which contains a non-empty set $Y$ of $(d-1)$-simplices 
of $H$ and replacing these simplices with $Y'$, the set of $(d-1)$-simplices in $x$ 
which are in the complement $Y$. 
\end{defn}

{\bf Examples.} \\
{\bf 1)} If $H$ is a simple curve in a $2$-sphere $G$ and $x$ is a triangle containing
a single edge $e$ of $H$, replace $e$ with the two other edges of the triangle.
This stretches the curve a bit. The reverse operation produces a ``shortcut" between two
vertices $(a,b)$ visited by the curve initially as $(a,c,b)$. \\
{\bf 2)} If $H$ is a $2$-dimensional graph, then a homotopy step is done by 
replacing a triangle in the tetrahedron with the 3 other triangles of a 
tetrahedron. \\

{\bf Remarks.} \\
{\bf 1)} It is allowed to  replace an entire $d$-simplex with the empty graph to 
allow a smaller dimensional sphere to be deformed to the empty graph.
This is not different in the continuum, where we deform curves to a point.  \\
{\bf 2)} A homotopy step does not honor embeddings in general. However, it preserves 
the class of simple curves with the empty curve included.  \\
{\bf 3)} We have in the past included a second homotopy deformation which removes or adds
backtracking parts $(a,b,a)$.
Since is only needed if one looks at homotopy deformations of general curves, we don't use it.
Homotopy groups must be dealt with using graph morphisms, rather than graphs. 
The backtracking deformation steps would throw us from the class
of simple curves. 

\begin{defn}
We say that a $k$-sphere $H$ is {\bf trivial} in a $d$-sphere $G$ if there is a 
sequence of simple homotopy deformations of $H$ which deforms $H$ to the empty graph. 
If every $1$-sphere is trivial in $G$, then $G$ is called {\bf simply connected}. 
If every $k$-sphere is trivial in $G$, we say the $k$'th homotopy class
is trivial.
\end{defn}

{\bf Remarks.} \\
{\bf 1)} The set of simple closed curves is not a group, as adding a curve to itself 
would cross the same point twice. Similarly, the set of simple $k$-spheres is not a group. 
In order to define the fundamental group, one has to look at graph homomorphism and not
at the images. This is completely analogue to the continuum, where one looks at 
continuous maps from $T^1$ to $G$. \\
{\bf 2)} Unlike in the continuum, 
where the zero'th homotopy set $\pi_0(G)$ is usually not provided with a group structure, 
but $\pi_0(G)$ can has a group structure. It is defined as the commutative 
group of subsets of $V$ with the symmetric difference $\Delta$ as addition, 
modulo the subgroup generated by sets 
$\{ \{a,b\} \; | \; (a,b) \in E\}$. It is of course $Z_0^{b_0}$ where $b_0={\rm dim}(H^0(G))$
is the number of connectivity components. \\
{\bf 3)} The Hurewicz homomorphism $\pi_0(G) \to H^0(G)$
maps a subset $A$ of $V$ to a locally constant function obtained by applying the heat flow
$e^{-L_0t}$ on the characteristic function $1_A(x)$ which is $1$ on $A$ and $0$ else, playing
the role of a $0$-current = generalized function in the continuum. \\
{\bf 4)} Also the Hurewicz homomorphism $\pi_1(G) \to H^1(G)$ is explicit by applying the heat
flow $e^{-L_1 t}$ on the function on edges telling how many times the curve has passed 
in a positive way through the edge. \\

{\bf Examples.} \\
{\bf 1)} Every simple curve in a $2$-sphere is trivial if it can be deformed to the empty graph. 
This general fact for $d$-spheres is easy to prove in the discrete setup because by definition,
a $d$-sphere becomes contractible after removing one vertex. The contraction of this punctured sphere
to a point allows a rather explicit deformation of the curve to the empty graph. \\
{\bf 2)} The deformation works also for $0$-spheres. In a connected graph, any embedded
$0$ sphere can be homotopically deformed to the empty graph. 
So, a graph is connected if and only if every $0$-sphere in $G$ is trivial.

\begin{defn}
Fix a geometric $d$-dimensional graph $G$.
Let $\pi_k(G)$ denote the union of all graph homomorphisms from a graph in the set 
$\{ C_k, k \geq 3 \}$ to $G$. Any such homomorphism $\phi: H \to G$ defines the 
{\bf homomorphism graph}, for which the vertices are the union of the 
vertices of $H$ and $G$ and for which the edge set is the union of the edges in $C$ and $G$ 
together with all pairs $(a,\phi(a))$. Two such homomorphisms are called homotopic, if the corresponding
homomorphism graphs are homotopic. The homotopy classes $\pi_1(G)$ define the {\bf fundamental group} of $G$. The
$0$-element in the group is the homotopy class of a map from the empty graph to $G$. 
The addition of two maps $C_k \to G, C_l \to G$ is a map from $C_{k+l} \to G$ obtained in the usual way 
by first deforming each map so that $\phi_i(0)=x_0$ is a fixed vertex $x_0$, then define 
$\phi(t)=\phi_1(t)$ for $t \leq k$ and then $\phi(t)=\phi_2(t-k)$ for 
$k \leq t \leq k+l$.
\end{defn}

{\bf Remarks.} \\
{\bf 1)} If one would realize the graph in an Euclidean space and see it as a triangularization 
of a manifold, then the fundamental groups of $G$ and $M$ were the same. 
The groups work also in higher dimensions. 
As we have to cut up a sphere at the equator to build the addition in the higher homotopy groups, 
it would actually be better to define the addition in the enhanced picture and look at maps $H_1 \to G_1$
where $H_1$ is the enhanced graph of the $k$-sphere $H$ and $G_1$ the enhanced version of the graph $G$. 
A deformation of a graph $H$ to a graph $K$ is then geometrically traced as a surface. \\
{\bf 2)} As a single basic homotopy extension step $C_k \to G$ to $C_{k+1} \to G$
keeps the map in the same group element of $\pi_1(G)$, the verification that the group operation is well 
defined is immediate.  \\

There are various generalized notions of ``geometric graphs", mirroring the definitions from the continuum. 
We mention them in the next definition, as we still explore discrete versions of questions related to 
Schoenflies problem in the continuum. The main question is whether there are discrete versions of 
exotic spheres, spheres which are homeomorphic to a $d$-sphere but for which unit spheres are not 
spheres. 

\begin{defn}
A {\bf homology $d$-sphere} is a geometric graph of dimension $d$ which has the same homology than a $d$-sphere. 
It is a geometric graph of dimension $d$ with Poincar\'e polynomial $p_G(x) = 1+x^d$.
A {\bf homology graph} of dimension $d$ is a graph for which every unit sphere is a homology sphere.
A {\bf pseudo geometric graph} of dimension $d$ is a graph for which every unit sphere is a finite
union of $(d-1)$ spheres. A {\bf discrete $d$-variety} is defined inductively as a graph for which
every unit sphere is a $(d-1)$-variety with the induction assumption that a $(-1)$-variety is the
empty graph.
\end{defn}

{\bf Examples.} \\
{\bf 1)} An example of a homology sphere can be obtained by triangulating the dodecahedron and doing identifications
as in the continuum. A suspension of a homology sphere is an example of a homology graph. \\
{\bf 2)} A figure eight graph is an example of a pseudo geometric graph of dimension $1$. \\
{\bf 3)} The cube graph or dodecahedron graph are examples of discrete $1$-varieties; their unit spheres are the
$0$-dimensional graphs $P_3$ which are not $0$-spheres but $0$-varieties.  

\begin{defn}
Two $k$-spheres $H,K$ in a $d$-sphere $G$ are called {\bf geometric homotopic within $G$}
if there is a geometric $(k+1)$-dimensional graph with boundary $M$ in the
$(d+1)$-dimensional graph $G \times L_n$ such that $M \cap (G \times \{0\}) = H \times \{0\}$ 
and $M \cap (G \times \{n\}) = K \times \{0\}$ and such that the boundary of $M$ 
is included in the boundary of $G \times L_n$. 
\end{defn}

{\bf Remarks.} \\
{\bf 1)} Given a $(d-1)$-sphere $H$ embedded in a $d$-sphere $G$. The deformation $H \to H'=H \Delta S(x)$
is equivalent to a homotopy deformation of the complement. \\
{\bf 2)} The above definition can can also be done for more general $k$-spheres (where $k$ is not
necessarily $d-1$) by taking intersections of unit spheres with $(k+1)$-sphere and performing the 
deformation within such a sphere.  \\
{\bf 3)} Any homotopy deformation of $H$ within $G$ defines a deformation of the embedding of
$H_1$ in $G_1$ and can be seen as a geometric homotopy deformation, a surface in $G \times L_2$. 
We will explore this elsewhere. 
Let $f$ be the polynomial in the variables $x_1,\dots,x_n$ representing vertices in $G$.
The function $f(x_1,\dots,x_n)(a + ab + b)$ describes the $(d+1)$-dimensional space
$G \times L_2$ in which we want to build a surface. Let $g(y_1,\dots,y_m)$ be the function describing the
surface $H$ in $G$, where $y_i$ are the vertices in $H$. Make the deformation at $x$: 
Define $a g(x_1,\dots,x_n) + a b [g_o(S(x)) x] + b [ x + g_e(S(x)) + g(x_1,\dots,x_n)]$. 
For example if $f_H = a+c+ac, f_K=a+b+c+ab+bc$, $f = u (a+b+c+a b+b c)+u v(a b c+a+c)+v(a+c+a c)$. \\
{\bf 4)} In \cite{KnillTopology} we wondered what the role of $1$ in the ring describing graphs could be. 
It could enter in reduced cohomology which is used in the
Alexander duality theorem $\overline{b}_k(H)=\overline{b}_{d-k-1}(G-H)$ (going back to \cite{Alexander15}). 
Here $b_k=\overline{b}_k$ if $k>0$ and $b_k=1+\overline{b}_k$ if $k=0$. 
In the Jordan case for example, where $b(H)=(1,1,0)$ and $b(G-H)=(2,0,0)$ one gets
$\overline{b}(H)=(0,1)$ and $b(G-H)=(1,0)$. When embedding a $2$ sphere $H$ in a $3$-sphere $G$,
then $b(H)=(1,0,1)$ and $\overline{b}(H)=(0,0,1)$ as well as  
$b(G-H)=(2,0,0)$ and $\overline{b}(G-H)=(1,0,0)$. This works for any $d$ as
$b(H)=(1,0,\dots,0,1)$ and $b(G-H)=(2,0,\dots,0)$ by Schoenflies. \\

\section{Tools}

In this section, we put together three results which will be essential in the proof. 
The first is the triviality of the fundamental group in a sphere:

\begin{lemma}[trivial fundamental group]
Every embedding of a $1$-sphere $H$ in a $d$-sphere $G$ for $d>1$ is 
homotopic to a point.
\end{lemma}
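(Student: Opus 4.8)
The plan is to prove this by induction on $d$, with the base case $d=2$ treated directly and the inductive step reducing the $d$-dimensional situation to a lower-dimensional one by passing to a unit sphere. The key mechanism throughout is the fact, recorded in the examples following the definition of triviality, that a $d$-sphere $G$ becomes contractible after removing a single vertex $p$: fix such a $p$ with $S(p)$ a $(d-1)$-sphere, and let $G \setminus \{p\}$ be contractible, witnessed by a sequence of homotopy reduction steps collapsing it to a point $q$. I would first handle the case where the embedded $1$-sphere $H$ misses $p$, i.e.\ $p$ is not a vertex of $H$. Then $H$ sits inside the contractible graph $G\setminus\{p\}$, and I want to ``ride" the contraction: each homotopy reduction step removes a vertex $x$ with $S(x)$ contractible; if $x$ is not on $H$ there is nothing to do, and if $x$ is on $H$ then, because $H$ is embedded, the intersection $H \cap S(x)$ is a sphere, necessarily a $0$-sphere (the two neighbours of $x$ along $H$), lying in the contractible graph $S(x)$; since a $0$-sphere in a connected graph is trivial (last example before the Tools section), I can first apply a simple homotopy deformation of $H$ that pushes $x$ off the curve — concretely, take a $2$-simplex, or more generally the relevant low-dimensional simplex in $S(x)\cup\{x\}$, containing the edges of $H$ at $x$ and swap — reducing the number of vertices of $H$ lying on the to-be-removed vertex. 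Iterating, $H$ is deformed into an ever-smaller contractible neighbourhood and finally into the empty graph.

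The remaining case is that $p$ lies on $H$. Here I would first do a simple homotopy deformation of $H$ near $p$ to remove $p$ from the curve: since $d>1$, the unit sphere $S(p)$ is a $(d-1)$-sphere with $d-1\ge 1$, hence connected, and the two $H$-neighbours $a,b$ of $p$ are joined by a path in $S(p)$; replacing the sub-path $a,p,b$ of $H$ by a path through $S(p)$ avoiding $p$ is realized as a finite composition of the allowed simple homotopy deformations (each swapping a set of edges inside a $2$-simplex). One must check that finitely many such swaps suffice and that the curve stays simple at each stage — this is where the connectedness of $S(p)$ and an innermost-disk or shortest-path argument inside $S(p)$ is used. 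After this the curve misses $p$ and we are back in the previous case.

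For the base case $d=2$: $G$ is a $2$-sphere, $G\setminus\{p\}$ is a contractible $2$-ball-like graph (a punctured $2$-sphere), and a simple closed curve $H$ in it, after being pushed off $p$ as above, can be shrunk by repeatedly finding a triangle of $G$ containing exactly one edge of $H$ on the ``small side" and doing the shortcut deformation; an area/Euler-characteristic count — the number of triangles enclosed strictly decreases — guarantees termination. This is exactly the discrete analogue of shrinking a Jordan curve across the triangles of a disk.

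The step I expect to be the main obstacle is the bookkeeping in the inductive/collapsing step: verifying rigorously that each homotopy reduction step of the ambient contraction of $G\setminus\{p\}$ can be accompanied by a legal sequence of simple homotopy deformations of $H$ so that $H$ never acquires a bad self-intersection and the process terminates. In particular one must confirm that when a vertex $x$ with contractible $S(x)$ is about to be removed and $H$ passes through $x$, the local picture $H\cap S(x)$ is always a $0$-sphere (not something higher-dimensional) — this uses the embedding hypothesis crucially — and that pushing $H$ across inside the contractible $S(x)$ is itself a composition of the permitted $2$-simplex swaps rather than a more general move. Handling this cleanly may require first enhancing to the simplex level $G_1$, where, as noted in the excerpt, deformations behave better and $H_1$ remains an embedding, and carrying out the contraction there.
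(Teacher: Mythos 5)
Your proposal follows essentially the same route as the paper: puncture the sphere at a vertex, use the contractibility of the resulting ball, and drag the embedded circle along the collapse by local simple deformations, passing to the enhanced picture $H_1\subset G_1$ to keep the deformation well-behaved. The bookkeeping issue you flag (that each rerouting of $H$ past a collapsed vertex is a legal finite composition of triangle swaps preserving simplicity) is left at the same level of detail in the paper's own proof, which likewise alternates deformations of the ambient ball and of the curve without spelling out that step.
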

\begin{proof}
Look at the sphere $H_1$ in $G_1$.  Remove a vertex $x$ disjoint from $H_1$. 
By definition of a $d$-sphere, this produces a $d$-ball. The curve $H_1$ is contained in this ball $B$. 
We can now produce a homotopy deformations at the boundary of $B$ until $H$ is at the boundary. 
Note that $B$ does not remain a ball in general during this deformation as the boundary might not generate
itself but a larger set. But $B_1$ remains a ball. 
Once the $H$ is at the boundary switch an make homotopy deformations of $H$ until $H$ again
in the interior of $B$. Continuing like that, perform alternating homotopy deformations of $B$ and $H$. 
Because the ball $B$ can be deformed to a point, we can deform $H$ to the empty graph. 
\end{proof}

We now show that if sphere $H$ is a subgraph of $G$, then the enhanced sphere $H_1$ is 
embedded. It is an important point but readily follows from the definitions: 

\begin{propo}
If $H$ is a $(d-1)$-sphere which is a subgraph of a $d$-sphere $G$, then $H_1$ is
embedded in $G_1$. 
\end{propo}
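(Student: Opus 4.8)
The plan is to unwind the definition of embedding and reduce everything to a statement about simplicial structure. Recall that the vertices of $G_1$ are the complete subgraphs (simplices) of $G$, with two joined by an edge when one is a face of the other, and that $H_1$ sits inside $G_1$ because every simplex of $H$ is a simplex of $G$. First I would observe the key structural fact about unit spheres in $G_1$: if $\sigma$ is a simplex of $G$, then the unit sphere $S_{G_1}(\sigma)$ consists of all simplices of $G$ that are a proper face of $\sigma$ together with all simplices that properly contain $\sigma$, and this graph is (isomorphic to) the join $(\partial\sigma)_1 \star ({\rm lk}(\sigma))_1$ where ${\rm lk}(\sigma)$ is the link of $\sigma$ in $G$. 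Since $G$ is a $d$-sphere, $\partial\sigma$ is a $(\dim\sigma - 1)$-sphere and the link of $\sigma$ is a $(d-1-\dim\sigma)$-sphere, so $S_{G_1}(\sigma)$ is a $(d-1)$-sphere, confirming that $G_1$ is geometric — but more importantly the join decomposition is what I will exploit.

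Next I would turn to the condition that must be checked: for any collection of simplices $\sigma_1,\dots,\sigma_k$ of $G$ that are mutual neighbors in $G_1$ — i.e. that form a flag, a chain of faces $\sigma_1 \subset \sigma_2 \subset \cdots \subset \sigma_k$ after reordering — one must show $H_1 \cap \bigcap_{j=1}^k S_{G_1}(\sigma_j)$ is a sphere (possibly empty). The intersection $\bigcap_j S_{G_1}(\sigma_j)$ over a chain is, by the face/coface description above, the set of simplices that are either proper faces of the smallest $\sigma_1$ or proper cofaces of the largest $\sigma_k$ or lie strictly between consecutive members of the chain; this is again a join of boundary spheres of the $\sigma_i$ and links, hence a sphere in $G_1$, of dimension $d-k$. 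Intersecting with $H_1$ means restricting to those simplices which lie in $H$. Here I would split on whether $\sigma_1$ itself is a simplex of $H$. If every $\sigma_j$ lies in $H$, then the intersection with $H_1$ is exactly the analogous join built from boundaries of the $\sigma_j$ inside $H$ and the link of $\sigma_k$ taken inside $H$; since $H$ is a $(d-1)$-sphere all these pieces are spheres and the join of spheres is a sphere. If some $\sigma_j$ is not a simplex of $H$, then no simplex of $\bigcap_j S_{G_1}(\sigma_j)$ of dimension $\geq \dim\sigma_j$ can lie in $H$ (a simplex of $H$ containing a non-$H$ face would force that face into $H$), which trims the intersection to the part coming from proper faces of $\sigma_1$ lying in $H$ — again a boundary sphere inside $H$, hence a sphere, or the empty graph if $\sigma_1 \notin H$ has dimension $0$.

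The main obstacle I anticipate is the bookkeeping in the mixed case: verifying precisely which simplices of the join $\bigcap_j S_{G_1}(\sigma_j)$ survive intersection with $H_1$ and checking that the surviving graph is exactly a join of $H$-internal boundary spheres and an $H$-internal link, using that $H$ is a full (induced) subgraph on its vertex set so that "$\tau$ is a simplex of $H$" depends only on the vertices of $\tau$. Once the surviving graph is identified as such a join, the conclusion is immediate from the facts, already available in the excerpt, that boundaries of simplices and links of simplices in a sphere are spheres and that joins (suspensions) of spheres are spheres. I would close by noting the special point that the empty intersection is permitted as "a sphere" (the $(-1)$-sphere), which is exactly what happens when the chain $\sigma_1\subset\cdots\subset\sigma_k$ already uses up a full $(d-k)$-dimensional slice that misses $H$ entirely; this is the mechanism by which $H_1$ being a genuine embedding does not force $H$ itself to be embedded in $G$.
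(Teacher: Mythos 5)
Your first case---a chain $\sigma_1\subset\cdots\subset\sigma_k$ all of whose members are simplices of $H$---is correct and is essentially the paper's own argument: the paper observes that for such a chain the intersection $H_1\cap\bigcap_j S_{G_1}(\sigma_j)$ equals $\bigcap_j S_{H_1}(\sigma_j)$ computed inside the $(d-1)$-sphere $H_1$, and is therefore a sphere by the recursive structure of spheres (``drilling down''); your join decomposition $S_{G_1}(\sigma)=(\partial\sigma)_1\star({\rm lk}(\sigma))_1$ is just a more explicit way of saying the same thing. The problem is your second case, and it is not merely a bookkeeping gap: the claim that when some $\sigma_j$ fails to lie in $H$ the surviving graph is ``the proper faces of $\sigma_1$ lying in $H$ --- again a boundary sphere inside $H$, hence a sphere'' is false. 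The set of faces of a $G$-simplex that happen to be simplices of $H$ need not form a sphere. Take $G$ the octahedron with equator $a,b,c,d$ and poles $n,s$, and let $H$ be the Hamiltonian cycle $a,n,b,s,c,d,a$ (a $1$-sphere that is a subgraph of the $2$-sphere $G$, exactly the situation of Figure~1). For the triangle $\tau=\{a,b,n\}$ of $G$, the edges $an$ and $nb$ lie in $H$ but $ab$ does not, so $H_1\cap S_{G_1}(\tau)$ is the path $a,\{a,n\},n,\{n,b\},b$, i.e.\ $L_5$, a $1$-ball and not a sphere. (Your reduction also silently discards the surviving simplices lying strictly between consecutive $\sigma_i$'s below the first non-$H$ member of the chain, but that is secondary to this failure.)

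What this example really shows is that under the literal, unrestricted reading of the embedding definition --- quantifying over all flags of simplices of $G$ --- the proposition is simply false, since already a single unit sphere $S_{G_1}(\tau)$ with $\tau\notin H_1$ can meet $H_1$ in a ball. The paper's proof, and its insistence (e.g.\ in Figure~1) that $H_1$ is embedded in $G_1$ even when $H$ is a Hamiltonian cycle, make clear that the intended quantification is only over flags $x_1,\dots,x_k$ whose members are vertices of $H_1$, i.e.\ simplices of $H$. With that reading your second case never arises and your first case already finishes the proof; without it, no argument can close the gap because the statement being proved is wrong. So the fix is not to repair the mixed case but to recognize that it lies outside the (intended) hypothesis.
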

\begin{proof}
We have to show that $H_1 \cap S(x_1) \cap \dots \cap  S(x_k)$ is a $(d-k-1)$-sphere 
for every $k$ and every simplex $x$ with vertices $x_1,\dots,x_k$ in $H_1$. 
Any intersections are the same whether we see $H$ as part of $G$ or
whether $H$ is taken alone. The reason is that any of the unit spheres $S(x_k)$ consists
of simplices which either contain $x_k$ or are contained in $x_k$. None of these
simplices invoke anything from $G$. 
So, the statement reduces to the fact that the intersection of spheres $S(x_k)$ with 
$x_k$ belonging to a simplex form a sphere. But this is true by induction. For one
sphere it is the definition of a sphere. If we add an additional sphere, we drill
down to a unit sphere in a lower dimensional sphere. 
\end{proof}

Finally, we have to look at an intersection number. 
At appears at first that we need a transversality condition when describing
spheres $K,H$ of complementary dimension $1,d-1$ in a $d$-sphere $G$. 
While we will not require transversality, the notion helps to visualize the situation. 

\begin{defn}
Given an embedding of a $(d-1)$-sphere $H$ in a $d$-sphere $G$ and a simple curve $C$
in $G$. We say it {\bf $C$ crosses $H$ transversely} if for every $t$ such that $C(t) \in H$,
both $C(t-1)$ and $C(t+1)$ are not in $H$.
\end{defn}

More generally: 

\begin{defn}
Given a $d$-sphere $G$, let $K$ be an embedded $k$-sphere and let $H$
be an embedded $(d-k)$ sphere. We say $K,H$ {\bf intersect transversely} if 
$H_1,K_1$ intersect in a $0$-dimensional geometric graph. 
\end{defn} 

{\bf Remarks} \\
{\bf 1)} Given two complementary spheres $H,K$ in a sphere $G$. Look at the spheres $H_1,K_1$
in $G_1$. There is always a modification of the spheres so that they are transversal. 
Consider for example the extreme example of two identical $1$-spheres $H,K$ in the equator of 
the octahedron $G$. The graphs $H_1,K_1$ are closed curves of length $8$ inside the
Catalan solid $G_1$. Now modify the closed curves by forcing $H_1$ to visit the 
vertices in $G_1$ corresponding to the original $4$ upper triangles and $K_1$ visit the 
vertices in $G_1$ corresponding to the original $4$ lower triangles. The modified
curves now intersect in $4$ vertices. \\
{\bf 2)} Given a $(d-1)$-sphere $H$ embedded in a $d$-sphere $G$ and given a closed curve $C$
which is transverse to $H$ let $x_j$ be the finite intersection points. We need to 
count these intersection points. For example, if a curve is just tangent to 
a sphere, we have only one intersection point even so we should count it with
multiplicity $2$. When doing homotopy deformations, we will have such situations
most of the time. As we can not avoid losing transversality when doing deformations,
it is better to assign intersection numbers in full generality, also if we have no
transversality. \\

For the following definition, we fix an orientation of 
the $1$-sphere $C$ which we only require to be a simple closed curve and not an
embedding and we fix also an orientation on the $(d-1)$-sphere $H$. 
Since both are spheres and so orientable, this is possible. 

\begin{defn}
If the vertex set of $C$ is contained in the vertex set of $H$, we 
define the {\bf intersection number} to be zero. Otherwise,
let $\{ a=C(t_1),\dots,C(t_k)=b \} \subset H_1$ be a connected time interval in $H_1$. 
Let $y=C(t_0)$ be the vertex in $G \setminus H$ just 
before hitting $H_1$ and $z=C(t_{k+1})$ the vertex just after leaving $H_1$. 
The vertex $a$ is contained in a $d$-dimensional simplex generated by the edge $(y,a)$ in $C$ and a 
simplex $\sigma$ in $H_1$ containing $a$. This $d$-simplex has an orientation from the orientation
of the edge in the curve and the simplex $\sigma$ in $H$.
If the orientation of the simplex generated by $C$ and $H$ at the point agrees with the orientation 
of the $d$-simplex coming from $G_1$, 
define the {\bf incoming intersection number} to be $1$. Otherwise define it to be $0$. Similarly, we have an 
outgoing intersection number which is again either $0$ or $1$. 
The sum of all the incoming and outgoing intersection numbers is called the {\bf intersection number} of the curve. 
\end{defn}

In other words, if the curve $C$ touches $H$ and bounces back after possibly staying in $H$ for some time 
then the intersection number is $2$ or $-2$, depending on the match of orientation of this ``touch down".
If the curve $C$ passes through $H$, entering on and leaving on different sides, then 
the intersection number belonging to this ``crossing" is assumed to be $1$. 
If a curve is contained in $H$, then the intersection number is $0$. 
The intersection number depends on the choice of orientations chosen on $C$ and $H$, 
but these orientations only affect the sign. 
If the orientation of the curve or the $(d-1)$-sphere is changed, the sign changes, but only by an 
even number. 

\begin{lemma}
Two homotopic curves in a sphere $G$ with embedded $(d-1)$ sphere 
have the same intersection number modulo $2$. 
\end{lemma}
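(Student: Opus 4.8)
The plan is to reduce the statement to invariance under a single elementary homotopy step, since any homotopy between two curves is a finite composition of such steps. Recall that a simple homotopy deformation of a curve $C$ in $G$ replaces a single edge $e=(a,b)$ of $C$ lying in a triangle $t=(a,c,b)$ by the two edges $(a,c),(c,b)$ (or the reverse shortcut), and that more generally we work on the enhanced level $G_1$ where $H_1$ is genuinely embedded (Proposition above). So it suffices to prove: if $C'$ is obtained from $C$ by replacing $C$ restricted to a single $d$-simplex $x$ of $G$ (on the enhanced level, a unit region) with the complementary path through $x$, then the intersection numbers of $C$ and $C'$ with $H$ agree modulo $2$.

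The key step is a local computation inside the closed star of the $d$-simplex $x$ where the deformation happens. I would isolate the portion $P$ of $C$ inside $x$ and the portion $P'$ of $C'$ inside $x$; outside $x$ the two curves coincide, so their intersection contributions outside are literally equal, and the question is whether the contributions of $P$ and $P'$ to the intersection number with $H$ have the same parity. Now $P \cup \overline{P'}$ (where $\overline{P'}$ is $P'$ reversed) is a closed curve $\gamma$ contained in the single $d$-simplex $x$ (on the enhanced level, in a contractible region: the closed star of $x$ is a ball). The heart of the argument is then a \emph{parity mod 2 vanishing lemma}: a closed curve contained in a $d$-ball $B$ has even intersection number with any embedded $(d-1)$-sphere, because such a curve bounds — it is trivial in $B$ in the sense defined above — and the intersection number mod $2$ is a homomorphism from the (mod $2$) first homotopy / first homology of $G\setminus H$ to $\mathbf{Z}_2$. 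More concretely: $\gamma$ is null-homotopic inside $B$, so it is the boundary of a $2$-chain (a homotopy-deformation sequence sweeps out a $2$-dimensional region), and the intersection number of the boundary of a $2$-cell with a codimension-one sphere is even — each time the sweeping surface is pushed across $H$, two intersection points are created or destroyed simultaneously (an incoming one and an outgoing one), which is exactly the content of the "touch down counts $\pm 2$" remark preceding the statement.

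Thus the plan in order is: (1) decompose an arbitrary homotopy into elementary steps and localize to a single $d$-simplex $x$; (2) form the closed curve $\gamma = P \cup \overline{P'}$ living in a ball; (3) prove the auxiliary lemma that a closed curve bounding in a ball disjoint-up-to-$H$ region has even intersection number with $H$, by exhibiting it as a boundary and checking that crossing $H$ changes the count by $\pm 2$; (4) conclude $\mathrm{int}(C) - \mathrm{int}(C') = \mathrm{int}(\gamma) \equiv 0 \pmod 2$. The main obstacle I anticipate is step (3) done \emph{without} transversality: when the sweeping surface touches $H$ tangentially rather than crossing it cleanly, one must verify that the incoming and outgoing intersection numbers still change in lockstep, so that the net change is even; this is precisely why the intersection number was defined as a sum of separate incoming and outgoing $\{0,1\}$-contributions rather than a single signed count, and the bookkeeping there — cataloguing how the local picture at $x$ meets $H_1$ in all degenerate cases (curve entering $H$, staying, leaving on the same or opposite side) — is the delicate part. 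A secondary point to be careful about is that the ball $B$ need not stay a ball under the deformation, but $B_1$ does, as already used in the trivial-fundamental-group lemma, so all of this should be run on the enhanced level.
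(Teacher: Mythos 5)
Your overall skeleton (reduce to a single elementary deformation step, then argue locally at the $d$-simplex where the step happens) is exactly the paper's strategy, and your closing remarks correctly identify where the real work lies. But the middle of your argument, step (3), is circular as stated. You justify the auxiliary claim ``a closed curve $\gamma$ contained in a ball has even intersection number with $H$'' by saying that $\gamma$ is null-homotopic, that the null-homotopy sweeps out a $2$-chain, and that ``each time the sweeping surface is pushed across $H$, two intersection points are created or destroyed simultaneously.'' That last assertion \emph{is} the statement that a single homotopy deformation step changes the intersection number by an even amount --- i.e.\ it is precisely the lemma you are trying to prove, applied inside the ball. Indeed, in the paper the global ``even intersection number'' statement (Lemma~\ref{keylemma}) is \emph{deduced from} the present lemma together with simple connectivity, so you cannot route the proof of the present lemma back through a bounding/parity principle of that kind without first establishing the single-step case by hand.

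The fix is to do what you yourself flag as ``the delicate part'' and recognize that it is not a secondary obstacle but the entire content of the proof: since $C$ and $C'$ differ only inside one $d$-simplex $x$, compare directly how the connected visits of $C$ and of $C'$ to $H_1$ near $x$ match up. A genuine crossing (entering and leaving $H$ on opposite sides) persists as a crossing and keeps contributing $1+1$; the only configurations that can appear or disappear under the local replacement are touch-downs, each of which contributes $2$ (or $0$), so the total changes by an even number. That is the paper's one-line argument, and it needs no closed curve $\gamma$, no bounding $2$-chain, and no appeal to null-homotopy. A secondary caution: even if you kept the decomposition $\mathrm{int}(C)-\mathrm{int}(C')=\mathrm{int}(\gamma)$, it is not automatic from the definition, which assigns contributions to maximal connected time intervals spent in $H_1$; if the junction vertices where $P$ and $P'$ are spliced lie in $H_1$, a single visit of $C$ can be cut into two visits of $\gamma$, and the bookkeeping of incoming/outgoing contributions does not split additively without an argument.
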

\begin{proof}
Just check it for a single homotopy deformation step of the curve. As these steps are local,
a change of the intersection number can only happen if a curve makes a touch down at $H$
before the deformation and afterwards does no more intersect. This changes the intersection
number by $2$. 
\end{proof}

The following lemma will be used in Jordan-Brouwer and is essentially 
equivalent to Jordan-Brouwer:

\begin{lemma}
\label{keylemma}
The intersection number of a closed curve $K$ with an embedded  $(d-1)$-sphere $H$ in a
$d$-sphere $G$ is always even.
\end{lemma}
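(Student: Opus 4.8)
The plan is to argue by induction on the dimension $d$, reducing an intersection-number count in $G$ to an intersection-number count in a unit sphere $S(x)$, which is a $(d-1)$-sphere with an embedded $(d-2)$-sphere. The base case $d=1$ is direct: an embedded $0$-sphere $H$ in a $1$-sphere $G=C_n$ consists of two non-adjacent vertices, and any closed curve in $C_n$ must cross each of the two vertices the same number of times as it returns to its starting point; counting incoming plus outgoing contributions at each vertex visit and summing around the loop gives an even total. For the inductive step, the key reduction is this: a closed curve $K$ can, after a homotopy deformation (which by the previous lemma changes the intersection number only by an even amount), be assumed to be transverse to $H$ and moreover to pass, near each crossing, through the unit ball $B(x)$ of a single vertex $x$ of $H$. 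Inside that unit ball the curve enters $S(x)$, travels through it, and exits; the portion of $H$ meeting $B(x)$ is $H \cap S(x)$, which is an embedded $(d-2)$-sphere in the $(d-1)$-sphere $S(x)$ by the embedding hypothesis.

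Next I would make the bookkeeping precise. Fix orientations on $K$ and on $H$; these induce orientations on $H \cap S(x)$ and on the arcs of $K$ lying in $S(x)$. The claim is that the incoming-plus-outgoing intersection number contributed by $K$ at a passage through $B(x)$ equals, modulo $2$, the intersection number of the corresponding sub-arc of $K$ (closed up inside $S(x)$) with the $(d-2)$-sphere $H \cap S(x)$ inside $S(x)$. This is because the orientation comparison that defines the incoming/outgoing numbers in $G_1$ is the suspension of the corresponding orientation comparison in $S(x)_1$: a $d$-simplex through $x$ is the join of $x$ with a $(d-1)$-simplex of $S(x)$, and the sign splits off the $x$-coordinate cleanly. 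Summing over all passages of $K$ through vertices of $H$ and applying the induction hypothesis inside each $S(x)$ (each such sub-count is even) would give that the total is even. The one subtlety is that a single passage of $K$ may interact with $H \cap S(x)$ for several distinct vertices $x$ along the same excursion; I would handle this by first deforming $K$ so that between consecutive intersection vertices it leaves the union of unit balls of $H$-vertices entirely, localizing each crossing to exactly one $B(x)$.

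Alternatively, and perhaps more cleanly, I would phrase the whole argument in the enhanced picture: pass to $H_1 \subset G_1$, which is an honest embedded sphere by the earlier proposition, deform $K_1$ to be transverse, and observe that the intersection number is the mod-$2$ count of $0$-cells in $K_1 \cap H_1$, which is a signed count of the $d$-simplices of $G_1$ straddling $H_1$ along $K_1$. The triviality of the fundamental group (the trivial-fundamental-group lemma) lets me deform $K_1$ to a point when $d>1$; a point has intersection number $0$, and the deformation changes the number only by even amounts. This essentially finishes it for $d>1$, leaving only $d=1$ to check directly as above.

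The hard part will be the orientation/sign bookkeeping in the suspension step: verifying that the local intersection sign at a passage of $K$ through $B(x)$ in $G_1$ matches the induced sign in $S(x)_1$, so that the induction hypothesis applies term by term. Everything else — transversality after deformation, invariance mod $2$ under deformation, localization of crossings — is either already available from the preceding lemmas or is routine surgery on curves.
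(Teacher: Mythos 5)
Your second paragraph (``Alternatively, and perhaps more cleanly...'') is essentially the paper's entire proof: the paper simply invokes the trivial-fundamental-group lemma to deform the closed curve to a vertex not in $H$, notes that by the preceding lemma each homotopy step changes the intersection number only by an even amount, and observes that the deformed curve has intersection number zero. That two-line argument is all that is needed, and you correctly identify it as the clean route; you even add the $d=1$ base case that the paper leaves implicit (the fundamental-group lemma is only stated for $d>1$), which is a genuine improvement in care. Your primary route --- induction on dimension with localization of each crossing to a unit ball $B(x)$ --- is a genuinely different and much heavier approach, and as sketched it has a real gap: when you ``close up'' the sub-arc of $K$ inside $S(x)$ to compare its intersection number with $H\cap S(x)$, the parity of that comparison depends on which component of $S(x)\setminus (H\cap S(x))$ your closing path traverses, and knowing that there are exactly two such components with a well-defined parity is precisely the Jordan--Brouwer separation statement in dimension $d-1$ that the paper is in the middle of proving. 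So the induction-on-dimension route risks circularity with the main theorem (whose proof itself cites this lemma), whereas the deform-to-a-point route stands on its own. I would discard the first two paragraphs and keep only the alternative, fleshing out the $d=1$ case.
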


\begin{proof}
The sphere $G$ is simply connected, so that every curve $C$ can be deformed to a vertex
not in $H$. For the later, the intersection number is zero. 
\end{proof}

In the case of transversal intersections, the result can be illustrated geometrically. 
Look at the intersection of the $2$-dimensional deformation surface of the curve
with the $d$-dimensional deformation cylinder $H \times L_n$. 
As long as the curve never intersects $H$ in more than two adjacent vertices,
this is a $1$-dimensional geometric graph, which must consist of a finite union 
of circular graphs. 

\section{The theorems}

\begin{thm}[Discrete Jordan-Brouwer]
A $(d-1)$-sphere $H$ embedded in a $d$-sphere $G$ separates $G$
into two complementary components $A,B$. 
\end{thm}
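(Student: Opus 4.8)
The plan is to induct on the dimension $d$, following the classical topological argument but phrased entirely with the discrete tools assembled in the previous section. The base case $d=1$ is direct: a $0$-sphere $H$ embedded in a $1$-sphere (a circular graph $C_n$) consists of two non-adjacent vertices, and removing them from $C_n$ visibly leaves exactly two arcs, so $A,B$ are the two line-graph closures. For the inductive step, assume the theorem holds for $(d-1)$ and let $H$ be a $(d-1)$-sphere embedded in the $d$-sphere $G$. First I would show that $G \setminus H$ has \emph{at least} two components, then that it has \emph{at most} two.

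For the lower bound, pick any vertex $x$ with $x \notin H$ but whose unit ball meets $H$; more usefully, pick a vertex $x$ on $H$ and consider its unit sphere $S(x)$, which is a $(d-1)$-sphere, while $H \cap S(x)$ is, by the embedding hypothesis, a $(d-2)$-sphere embedded in $S(x)$. By the inductive hypothesis this separates $S(x)$ into two pieces, giving two ``local sides'' of $H$ at $x$. The real content is to promote this local separation to a global one: I would use the intersection-number machinery (Lemma on homotopy invariance mod $2$ and especially Lemma~\ref{keylemma}). Concretely, fix a vertex $p$ on one local side and $q$ on the other local side of $H$ near $x$, joined by a short curve crossing $H$ exactly once, hence with intersection number $1 \pmod 2$. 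If $p$ and $q$ lay in the same component of $G \setminus H$ we could close this short curve up with a path avoiding $H$, producing a closed curve of odd intersection number; but Lemma~\ref{keylemma} says every closed curve in $G$ has even intersection number with $H$. Contradiction, so $p$ and $q$ are in different components and $G \setminus H$ has at least two.

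For the upper bound --- that there are at most two components --- I would argue that any vertex $y \in G \setminus H$ can be joined by a curve in $G \setminus H$ to one of the two chosen ``sides'' $p$ or $q$. Take a curve from $y$ to $p$ in $G$, chosen (after a homotopy, using simple connectivity of $G$ for $d \geq 2$, or directly by a shortest-path argument) to be transverse to $H$, and push it off $H$: whenever it touches $H$, a simple homotopy deformation step localized in a $d$-simplex slides that touch-down off $H$ to one side, changing which side only an even number of times, so the parity of the number of ``true crossings'' is preserved. After finitely many such steps the curve meets $H$ only in genuine crossings, and each crossing switches the side; by the parity, $y$ ends up connected to $p$ if the crossing count is even and to $q$ if it is odd. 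Hence $G \setminus H$ has exactly two components $A \setminus H$, $B \setminus H$; setting $A,B$ to be the graphs generated by these together with $H$ gives $A \cap B = H$, $A \cup B = G$, and both $A \setminus H$, $B \setminus H$ nonempty, which is the asserted separation.

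The main obstacle I expect is making the ``push off $H$'' step rigorous: one must ensure that the local homotopy deformation of the curve inside a $d$-simplex of $G$ really can always move a tangential touch-down of the curve with $H$ off $H$ without creating new intersections elsewhere, and that this terminates. This is where the enhanced picture $G_1$ and the fact (proven in the previous Proposition) that $H_1$ is embedded in $G_1$ earn their keep: working in $G_1$, a touch-down occupies a clean set of simplices and the complement of $H_1$ within the relevant unit ball is well-behaved, so transversalization and the finiteness of the deformation both follow from the inductive separation statement applied one dimension down. A secondary subtlety is the well-definedness of ``the two sides of $H$ at $x$'' independently of the choice of $x$ --- this is handled by noting that moving $x$ along $H$ to an adjacent vertex changes $S(x)$ by a single homotopy step and the two local components vary continuously, so the global labelling $A,B$ is consistent; the intersection-number argument then certifies the labelling is exactly two-valued.
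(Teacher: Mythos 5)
Your proposal follows essentially the same route as the paper's proof: induction on the dimension, the inductive hypothesis applied to the $(d-2)$-sphere $H \cap S(x)$ inside the unit sphere $S(x)$ of a vertex $x \in H$ to produce the two local sides, and Lemma~\ref{keylemma} (every closed curve has even intersection number with $H$) to rule out a path in $G \setminus H$ joining those sides. The paper's treatment of the ``at most two components'' step is terser --- any further component would have $H$ in its boundary and hence meet $S(x)$, where by induction only two sides exist --- but this is the same idea as your crossing-parity argument, so the two proofs coincide in substance.
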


\begin{proof}
We use induction with respect to the dimension $d$ of the graph $G$. 
For $d=0$, the graph $G$ is the $0$-sphere and $H$ is the empty graph $S_{-1}$ and the 
complementary components $A,B$ are both $K_1$ graphs which by definition are $0$-balls.  \\

To prove the theorem, we prove a stronger statement: if $H$ is a $(d-1)$-sphere which 
is a subgraph of $H$, then $H_1$ separates $G_1$ into two components $A,B$.  \\

Take a vertex $x \in H \subset G$. When removing it, by definition of spheres, it 
produces the $(d-1)$-ball $H' = \setminus \{x\}$ and the $d$-ball $G' = G \setminus \{x\}$.
By definition of embedding, 
the boundary of $H'$ is $(d-2)$-sphere which is a subgraph of the boundary $S(x)$ of $G'$. 

By induction, $H'_1$ divides $G'_1$ into two connected parts $A'$ and $B'$. 

Let $A''$ be the path connected component in $G_1 \setminus H_1$ containing $A'$ and let $B''$ be the path 
connected component in $G_1 \setminus H_1$ containing $B'$. The union of $A''$ and $B''$ is $G$
if there was a third component it would have $H$ in its boundary and
so intersect $S(x)$, where by induction only two components exist.

The graphs $A=A'' \cup H_1$ and $B=B'' \cup H_1$ 
are $d$-dimensional graphs which cover $G_1$ and have the $(d-1)$ sphere $H_1$ as a common boundary. \\

It remains to show that the two components $A'',B''$ are disconnected. 
Assume they are not. Then there is a path $C$ in $G_1$ connecting a vertex 
$x \in A' \subset S(x)_1$ with a vertex $y \in B' \subset S(x)_1$ such that $C$ does not 
pass through $H$.  Define an other path $C'$ from $x$ to $y$ but within $S(x)$. 
The path $C'$ passes through $H$ and the sum of the two paths $C,C'$ form a closed path
whose intersection number with $H$ is $1$. 
This contradicts the intersection lemma~(\ref{keylemma}).
\end{proof}

{\bf Remarks.} \\
{\bf 1)} As we know $\chi(G)=1+(-1)^d$ and $\chi(H)=1-(-1)^d$ 
we have $2=\chi(G)+\chi(H) = \chi(A) + \chi(B)$ so that $\chi(A) + \chi(B) = 2$. 
We will of course know that $\chi(A)=\chi(B)=1$ but $\chi(A) + \chi(B)=2$ is for free. \\
{\bf 2)} The proof used only that $G$ is simply connected. We don't yet know that $A,B$ 
are both simply connected. At this stage, there would still be some possibility that one
of them is not similarly as the Alexander horned sphere. \\

{\bf Examples.}  \\
{\bf 1)} The empty graph separates the two point graph $P_2$ without edges 
into two one point graphs. 
{\bf 2)} The two point graph $P_2$ without edges separates $C_4$ into two $1$-balls $K_2$. \\
{\bf 3)} The equator $C_4$ in an octahedron $O$ separates $O$ into two wheel graphs $W_4$. \\
{\bf 4)} The unit circle $S(x)=C_5$ of a vertex $x$ in an icosahedron separates it into
a wheel graph $W_5$ and a $2$-ball with 11 vertices. \\
{\bf 5)} The $(d-1)$-dimensional cross polytop embedded in a $d$-dimensional cross
polytop separates it into two $d$-balls.  \\
{\bf 6)} More generally, a $d$-sphere $H$ divides the suspension $G=S_0 \star H$ 
into two balls, which are both suspensions of $H$ with a single point. \\

\begin{thm}[Jordan-Schoenflies]
A $(d-1)$-sphere $H$ embedded in a $d$-sphere $G$ separates $G$
into two complementary components $A,B$ such that
$A,B$ are both $d$-balls. 
\end{thm}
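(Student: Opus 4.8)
The plan is to induct on the dimension $d$, the Discrete Jordan--Brouwer theorem already supplying the decomposition $A\cup B=G$, $A\cap B=H$ with $A\setminus H$ and $B\setminus H$ disjoint, nonempty, and each a path component of the graph obtained from $G$ by deleting $V(H)$; what remains is to see that $A$ and $B$ are $d$-balls. The base cases $d=0$ (two copies of $K_{1}$) and $d=1$ (an embedded $P_{2}$ in $C_{n}$, which is cut into two line graphs) are immediate. For the inductive step I would first read off the local structure of a complementary component, say $A$, directly from the embedding. If $y\in A\setminus H$, then no neighbour of $y$ lies in $B\setminus H$, since an edge between the two components would avoid $V(H)$; hence $S_{G}(y)\subseteq A$, so $S_{A}(y)=S_{G}(y)$ is a $(d-1)$-sphere and $y$ is an interior vertex. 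If $y\in V(H)$, then $S_{G}(y)$ is a $(d-1)$-sphere and, because $H$ is embedded, $H\cap S_{G}(y)=S_{H}(y)$ is a $(d-2)$-sphere embedded in it; by the inductive Schoenflies hypothesis in dimension $d-1$ this $(d-2)$-sphere separates $S_{G}(y)$ into the two $(d-1)$-balls $S_{A}(y)=S_{G}(y)\cap A$ and $S_{B}(y)=S_{G}(y)\cap B$. Thus every $y\in V(H)$ is a boundary vertex of $A$, the boundary vertex set of $A$ is exactly $V(H)$, and it generates the $(d-1)$-sphere $H$. So $A$, and likewise $B$, satisfies every clause in the definition of a $d$-ball except possibly contractibility.

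To obtain contractibility I would move $H$ into a standard position. Since a homotopy deformation of $H$ in general destroys embeddings, I would pass, as in the proof of the trivial-fundamental-group lemma and of Jordan--Brouwer, to the enhanced graph $G_{1}$, where $H_{1}$ is embedded, still separates $G_{1}$, and where deformations keep the deformed surface a geometric separating subgraph; since enhancement preserves the property of being a $d$-ball, it is enough to argue in $G_{1}$. Removing from $G_{1}$ a vertex $v$ away from $H_{1}$ exposes a $d$-ball, and using its contraction, and that of its enhancements, one pushes $H_{1}$ across one $d$-simplex at a time until it becomes the unit sphere $S(x)$ of a vertex $x$. Each such simple homotopy deformation $H\mapsto H\,\Delta\,\sigma$ transfers a single $d$-simplex $\sigma$ from one piece to the other: on the receiving piece it is an elementary homotopy extension step, gluing $\sigma$ along the $(d-1)$-ball of its boundary that lies in the old surface, and on the other piece it is the inverse homotopy reduction step. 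In the terminal configuration $H=S(x)$ the two pieces are the unit ball $B(x)=x\star S(x)$, visibly a $d$-ball with boundary $S(x)$, and the graph $G\setminus\{x\}$, which is contractible by the definition of a $d$-sphere and in which each $y\in S(x)$ has link $S_{G}(y)\setminus\{x\}$ --- a $(d-1)$-sphere with one vertex deleted, hence a $(d-1)$-ball by the easily checked fact, proved by a separate induction, that deleting a vertex from a $k$-sphere leaves a $k$-ball whose boundary is the link of that vertex. Running the chain of moves backwards from this pair, and checking that each elementary transfer takes a $d$-ball to a $d$-ball, we conclude that the original $A$ and $B$ are $d$-balls.

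The main obstacle is exactly that last claim: that a single transfer of a $d$-simplex between the two complementary pieces preserves the full $d$-ball structure. Contractibility is preserved because the two moves are a homotopy extension step and a homotopy reduction step, which are built into the recursive definition of contractibility; the interior/boundary partition is preserved by the same link computation as above, but the bookkeeping --- tracking which faces of $\sigma$ and of its neighbours change type and verifying that the new boundary is again a $(d-1)$-sphere rather than merely a $(d-1)$-variety --- must be carried out carefully. Establishing in parallel that the deformation of $H_{1}$ to a unit sphere genuinely exists and stays inside separating geometric subgraphs of $G_{1}$, the embedding being lost along the way, is the other piece of real work; everything else is the inductive descent to the unit sphere $S_{G}(y)$ of a vertex, where Jordan--Brouwer in the same dimension and Schoenflies one dimension down finish the argument.
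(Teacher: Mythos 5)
Your proposal follows essentially the same route as the paper's proof: invoke Jordan--Brouwer, pass to the enhanced graph $G_1$ where $H_1$ stays embedded, and move $H_1$ by simple homotopy deformations (simplex transfers between the two domains) until one complementary piece becomes a unit ball, deducing contractibility of $A$ and $B$ from the chain of elementary homotopy steps. Your explicit verification of the interior/boundary vertex structure via the inductive Schoenflies hypothesis applied inside unit spheres is a worthwhile elaboration of the paper's terse assertion that ``the boundary is by definition the sphere $H$,'' and the gap you honestly flag --- checking that each individual simplex transfer preserves the full $d$-ball structure and that the deformation to a unit sphere actually terminates --- is left at essentially the same level of detail in the paper's own argument.
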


\begin{proof} 
Jordan-Brouwer gives two complementary domains $A,B$ in $G$. 
They define complementary domains $A_1,B_1$ in $G_1$, where $A_1,B_1$ have a nonzero number of 
vertices. The proof goes in two steps. We first show that $A_1,B_1$ are balls:
to show that $A_1$ is a $d$-ball, (the case of $B_1$ is analog), 
we only have to show that $A_1$ is contractible, as the boundary is by definition the sphere $H$.
It is enough to verify that we can make a homotopy deformation step of $H$ so that 
the number of vertices in $A_1$ gets smaller. This can be 
done even if the complement of $H$ has no vertices left, like for example if $H$ would be a Hamiltonian path
where $A$ is empty. A simple homotopy deformation of $H$ will reduce the number of vertices 
in $A_1$ and since $A_1$ only has a finite number of vertices, this lead to a situation, where $H_1$
has been reduced to a unit sphere meaning that $H$ has reduced to a simplex verifying that 
$A_1$ is a $d$-ball. The deformation step is done by taking any vertex $x$ in $A_1$ which belongs to a 
$d$-simplex $x$ in $A$. The step $H \to H \Delta x$ produces the step $H_1 \to H_1 \Delta S(x)$ in $G_1$. 
This finishes the verification that $A_1,B_1$ are $d$-balls. But now we go back to the 
original situation and note that $A$ and $A_1$ originally are homeomorphic. There is a
concrete cover of $A_1$ with $d$-balls such that the nerve graph is $A$. So, also 
$A$ is contractible. 
\end{proof} 

{\bf Remarks. } \\
{\bf 1)} Without the enhanced picture of the embedding $H_1$ in $G_1$,
there would be a difficulty as we would have to require the embedded graph $H$ to remain 
an embedding.  The enhanced picture allows to watch the progress of the deformation.  \\
{\bf 2)} As shown in the proof, one could reformulate the result by saying that 
if a $(d-1)$-sphere $H$ is a subgraph of a $d$-sphere $G$ (not necessarily an embedding)
then $H_1$ separates $G_1$ into two complementary components $A_1,B_1$ 
which are both $d$-balls.

\bibliographystyle{plain}

\end{document}